\theoremstyle{plain}
\newtheorem{assum}{Assumption}
\newtheorem{exmp}{Example}
\newtheorem{defn}{Definition}
\newtheorem{thm}{Theorem}
\newtheorem{rem}{Remark}
\newtheorem{cor}{Corollary}
\begin{document}
\global\long\def\R{\mathbb{R}}%
\global\long\def\D{\mathbb{D}}%
\global\long\def\diag{\mathop{\mathrm{diag}}}%
\global\long\def\cl{\mathop{\mathrm{cl}}}%
\global\long\def\he{\mathop{\mathrm{He}}}%
\global\long\def\Oz{\mathbf{O}}%
\global\long\def\supess{\text{sup\,ess}}%
\global\long\def\sign{\mathop{\mathrm{sign}}}%
\global\long\def\ln{\mathop{\mathrm{ln}}}
\global\long\def\mR{\mathbb{R}}%
\global\long\def\L{\mathscr{L}}%
\global\long\def\KL{\mathscr{KL}}%
\global\long\def\K{\mathscr{K}}%

\articletype{ARTICLE TEMPLATE}

\title{On annular short-time stability conditions for generalized Persidskii systems}

\author{\name{Wenjie Mei \textsuperscript{a}\thanks{CONTACT Wenjie Mei. Email: wenjie.mei@inria.fr}, Denis Efimov \textsuperscript{a}\thanks{CONTACT Denis Efimov. Email: denis.efimov@inria.fr},  and Rosane Ushirobira \textsuperscript{a}\thanks{CONTACT Rosane Ushirobira. Email: rosane.ushirobira@inria.fr}} \affil{\textsuperscript{a}Inria, Univ. Lille, CNRS, UMR 9189 - CRIStAL, F-59000 Lille, France}}

\maketitle

\begin{abstract}
This paper studies the trajectory behavior evaluation for generalized Persidskii systems with an essentially bounded input on a finite time interval. Also, the notions of annular settling and output annular settling for general nonlinear systems are introduced. We propose conditions for annular short-time stability, short-time boundedness with a nonzero initial state, annular settling, and output annular settling for a class of Persidskii systems. These conditions are based on the verification of linear matrix inequalities. An application to recurrent neural networks illustrates the usefulness of the proposed notions and conditions.
\end{abstract}

\begin{keywords}
Annular settling; output
annular settling; general nonlinear systems; annular short-time stability; generalized Persidskii systems
\end{keywords}

\section{Introduction}
The stability analysis of dynamical systems is a complicated issue,
especially for nonlinear dynamics with external inputs
\cite{Vidyasagar:81:Springer,Schaft:96:Springer,Khalil2002nonlinear}, for which the Lyapunov theory and the asymptotic stability concepts
\cite{Vidyasagar2002} are commonly used. However, in some scenarios,
analyzing the system behavior at all positive times is unnecessary;
one may focus on the trajectories in a bounded time interval. A
representative example is given by the short-time stability concept,
introduced in the 1950s \cite{Dorato,dorato1961short} to
address the problem that the solutions of a system stay in a given
domain, starting from a bounded set of initial conditions, during a
finite time interval. Various investigations of short-time stability
and stabilization of linear time-varying systems were introduced
\cite{Amato2010,Garcia2009b,Amato2009a}, as well as the robust
short-time stability analysis for linear systems
\cite{Amato2001a}. Further studies in distinct directions include
techniques of stochastic systems \cite{Yang2009} or annular
short-time stability \cite{Amato2016a} (called annular
finite-time stability \cite{Amato2016a}), to mention several
examples.

In this work, we call the mentioned stability concept short-time \cite{dorato1961short} instead of finite-time stability (also frequently used to denote the same property
\cite{Dorato}). The reason is that there is another notion of
stability with the same name, dealing with a finite time of
convergence of the system trajectories to an invariant mode
(\emph{e.g.}, an equilibrium or a desired set) combined with Lyapunov stability for all positive times. Nevertheless, all analysis is carried out in a bounded time interval for short-time stability notion.

 We study in this note several concepts. Following \cite{Amatoa,Amato2016a}, robust annular short-time stability (ASTS) and its extensions to short-time boundedness with a nonzero initial state (STBNZ) are investigated, while annular settling (AS) and output annular settling (oAS) are first time introduced. Roughly speaking, a system is said to be ASTS if the system state stays in a bounded domain during
a specified time interval, given an initial bounded region separated from the origin, while in AS case, only boundedness at the end of the time interval is required. The notion of ASTS was introduced for safety
problems, where the state must behave between given lower and upper bounds; for instance, the water level in a tank should not exceed the prescribed thresholds \cite{Amato2016a}.

The dynamical models considered in this work belong to the class of so-called generalized Persidskii systems (extended from the
dynamics in \cite{Barbashin1961,Persidskii1969}), which have been
extensively studied in the context of neural networks
\cite{Ferreira2005}, biological models \cite{Mei2021b}, and power
systems \cite{Hsu1987}.  Recent advances for generalized Persidskii
models include, for example, the conditions of input-to-state
stability, input-to-output stability and convergence, and the
synthesis of a state observer
\cite{Efimov2019a,Mei2020a,Mei2020b,Mei2021}.  Note that the most
existing approaches to synthesizing Lyapunov functions for stability
analysis in nonlinear systems involve various canonical forms of the
studied differential equations, \emph{e.g.}, Lur'e systems
\cite{Guiver2020a}, homogeneous models \cite{Moreno2012}, Persidskii
systems \cite{Persidskii1969}, and Lipschitz dynamics. Due to the
intrinsic nature of nonlinearities, the relevant stability conditions
may be rather sophisticated. However, in generalized Persidskii
systems, a Lyapunov function is found, where the stability conditions can be formulated in the form of linear matrix inequalities (LMIs) \cite{Efimov2019a},
which is a rare case for nonlinear dynamics.

Continuous-time recurrent neural networks constitute an example of
generalized Persidskii systems.  An important application of these neural networks consists in classifying temporal sequences defined on a finite time interval. In such a scenario, the output of a trained network has to
approach desired levels (typically, a small compact set) in the
considered time interval, indicating a class of the input signals. As
we will show such a behavior can be quantified using the oAS notion (if the output equals the state, the
concept of AS can be utilized) that is introduced in this paper. Therefore, the main
contributions of this work include the introduction of these new
notions: oAS and AS for general nonlinear systems, the formulation of
ASTS, STBNZ, AS, and oAS conditions for a family of generalized
Persidskii systems, as well as the illustration of the usefulness and efficacy of the
proposed conditions in recurrent neural networks.

The organization of the rest of this paper is as follows. 
Section~\ref{sec:Prelim} introduces the preliminaries and the definitions of
considered stability properties. The generalized
Persidskii system is presented in Section~\ref{sec:Problem}. In
Section~\ref{sec:ASTS_condition_Per}, ASTS, STBNZ, AS, and oAS
conditions for the class of considered systems are given, and in
Section~\ref{sec:example}, an application to continuous-time recurrent
neural networks is investigated to examine the efficiency of the
proposed results.

\section*{Notation}
\begin{itemize}
\item $\mathbb{N}$, $\mathbb{R}$ and $\R_{+}$ represent the sets of
  natural, real and nonnegative real numbers, respectively.

\item $\mathbb{R}^{n}$ and $\mathbb{R}^{m\times n}$ denote the vector
  spaces of real $n$-vectors and $m \times n$ real matrices,
  respectively. The set of $n \times n$ diagonal matrices (with
  nonnegative diagonal elements) is denoted by $\mathbb{D}^{n}$
  ($\mathbb{D}^{n}_{+}$).

\item $I_{n}$ and $O_{m\times n}$ stand for the identity $n\times
  n$-matrix and the zero $m\times n$-matrix, respectively. Denote the
  vector of dimension $n$ with all elements equal $1$ by ${\bf{1}}_n$.

\item The symbol $\rVert\cdot\rVert$ refers to the Euclidean norm on
  $\mathbb{R}^{n}$ (and the induced matrix norm $\rVert A\rVert$ for a
  matrix $A\in\mathbb{R}^{m\times n}$).
  \item For $p$, $n\in\mathbb{N}$ with $p\leq n$, the notation $\overline{p,n}$
is used to represent the set $\{p,\dots,n\}$.  

\item $A^{(s)}$ stands for the $s$th row of the matrix $A \in
  \mathbb{R}^{m \times n}$, $s \in \overline{1,m}$. For all
  $i,j\in\overline{p,n}$, let $(A_{i,j})_{i,j=p}^{n}$ denote the block
  matrix {\scriptsize $\begin{bmatrix} A_{p,p} & \cdots &
      A_{p,n}\\ \vdots & \ddots & \vdots\\ A_{n,p} & \cdots & A_{n,n}
\end{bmatrix}$}. 

\item $B_n(\delta) = \{y\in \R^n: \; \rVert y \rVert \leq \delta \}$
  stands for the closed ball of radius $\delta>0$ centered at the
  origin; $\cl(E)$ denotes the closure of a set $E \subset \R^n$.
\item For a Lebesgue measurable function
  $u\colon\mathbb{R}\rightarrow\mathbb{R}^{m}$, define the norm
  $\rVert
  u\rVert_{(t_{1},t_{2})}=\text{ess}\sup_{t\in(t_{1},t_{2})}\rVert
  u(t)\rVert$ for $(t_{1},t_{2})\subseteq \mathbb{R}$. Denote by
  $\mathscr{L}_{(t_1,t_2)}^{m}$ (or $\mathscr{L}_{\infty}^{m}$) the
  Banach space of functions $u$ with $\rVert
  u\rVert_{(t_1,t_2)}<+\infty$ (or $\rVert u\rVert_{\infty}:=\rVert
  u\rVert_{(-\infty,+\infty)}<+\infty$).
\item A continuous function $\sigma:\mathbb{R}_{+}\to\mathbb{R}_{+}$ belongs
to class $\mathscr{K}$ if it is strictly increasing and $\sigma(0)=0$;
it belongs to class $\mathscr{K}_{\infty}$ if it is also unbounded.
\item For a continuously differentiable function $V\colon\mathbb{R}^{n}\to\mathbb{R}$,
denote by $\nabla V(\nu)f(\nu)$ the derivative of $V$ along
the vector field $f$ evaluated at point $\nu\in\mathbb{R}^{n}$. 
\item The maximum eigenvalue of a symmetric matrix $P$ is denoted by $\lambda_{\max}(P)$. 
\end{itemize}

\section{Preliminaries} \label{sec:Prelim}
Consider the differential equation 
\begin{gather}
\begin{aligned}
\dot{x}(t) & =F(x(t),u(t)),\quad t \in  \Delta = [0,T] \subset \R, \\
y(t) & = h(x(t)),
\end{aligned}
\label{eq:general_nonlinear_system}
\end{gather}
where $x(t)\in\mathbb{R}^{n}$ is the state vector;
$u(t)\in\mathbb{R}^{m}$ is the external input,
$u\in\mathscr{L}_{\infty}^{m}$; $\Delta$ is the time interval of
interest with $0<T<+\infty$. Moreover,
$F\colon\mathbb{R}^{n}\times\mathbb{R}^{m}\rightarrow\mathbb{R}^{n}$
is a continuous function and
$h\colon\mathbb{R}^{n}\rightarrow\mathbb{R}^{p}$ is a continuously
differentiable function.  In the rest of the paper, to lighten the
notation, the time-dependency of variables might remain implicitly
understood; for instance we will write $x$ for $x(t)$.

For an initial state $x(0)=x_{0}\in\mathbb{R}^{n}$ and
$u\in\mathscr{L}_{\infty}^{m}$, we denote the corresponding solution
of system~\eqref{eq:general_nonlinear_system} by $x(t,x_{0},u)$
for the values of $t\in\mathbb{R_{+}}$ the solution exists, so the
corresponding output is $y(t,x_0,u)=h(x(t,x_0,u))$. In the sequel, we
assume that such a solution of~\eqref{eq:general_nonlinear_system} is
uniquely defined, for any $x_0\in\R^n$ and $u\in\L_\infty^m$, for all
$t\in\Delta$.

\begin{defn}[\cite{Amatoa,Amato2016a}]
\label{def:IOS} System~\eqref{eq:general_nonlinear_system}
is said to be:
\begin{enumerate}
\item {\em short-time bounded} with respect to
  $(\Delta,\gamma,\delta)$, if for given $\gamma \geq 0$ and $\delta
  >0$, 
\[
x_0 =0, \rVert u \rVert_{\Delta} \leq \gamma  \quad  \Rightarrow \quad    \rVert x(t,x_0,u) \rVert \leq \delta, \quad \forall t \in \Delta.
\] 
\item {\em short-time bounded with nonzero initial state} (STBNZ) with
  respect to $(\Delta,\epsilon,\gamma,\delta)$, if for given $\epsilon
  >0$, $\gamma \geq 0$ and $\delta >0$,
\[
\|x_0\|\leq\varepsilon,\; \rVert u \rVert_{\Delta} \leq \gamma \quad  \Rightarrow  \quad  \rVert x(t,x_0,u) \rVert \leq \delta, \quad \forall t \in \Delta.
\] 
\item {\em annular short-time stable} (ASTS) with respect to
  $(\Delta,\epsilon_1,\epsilon_2,\gamma,\delta_1,\delta_2)$, if for
  given $(\epsilon_1, \epsilon_2) \subseteq (\delta_1, \delta_2)
  \subset \mathbb{R}_+$ and $\gamma \geq 0$,
\begin{gather*}
x_0 \in \cl(B_n(\epsilon_2) \setminus B_n(\epsilon_1)), \; \rVert u \rVert_{\Delta} \leq \gamma  \\
\Rightarrow  \quad x(t,x_0,u) \in \cl(B_n(\delta_2) \setminus B_n(\delta_1)), \quad \forall t \in \Delta.
\end{gather*}
\end{enumerate}
\end{defn}

\medskip

In the classification of temporal sequences by neural networks,  the resulting output level at the time instant $T$ is  considered, whose value characterizes the class to which the given input signal belongs  (a complete analysis  over  the whole time interval $\Delta$ is nevertheless  unnecessary in this scenario, as it is considered in the case of ASTS). For investigation of this behavior, we further introduce two useful notions:

\begin{defn} \label{def:AS_oAS}
System~\eqref{eq:general_nonlinear_system} is said to be:
\begin{enumerate}
\item {\em annular settled} (AS) with respect to
  $(T,\epsilon_1,\epsilon_2,\gamma,\delta_1,\delta_2)$, if for given
  $(\epsilon_1, \epsilon_2) \subset \mathbb{R}_+$, $(\delta_1,
  \delta_2) \subset \mathbb{R}_+$ and $\gamma \geq 0$,
\begin{gather*}
x_0 \in \cl(B_n(\epsilon_2) \setminus B_n(\epsilon_1)), \; \rVert u \rVert_{\Delta} \leq \gamma  \\
\Rightarrow  \quad x(T,x_0,u) \in \cl(B_n(\delta_2) \setminus B_n(\delta_1)).
\end{gather*}
\item \em output annular settled (oAS) with respect to
$(T,x_0,\gamma,\delta_1,\delta_2)$, if for given   $x_0\in\R^n$, $(\delta_1, \delta_2) \subset \mathbb{R}_+$ and $\gamma \geq 0$,
\begin{gather*}
\rVert u \rVert_{\Delta} \leq \gamma  \Rightarrow  \quad y(T,x_0,u) \in \cl(B_p(\delta_2) \setminus B_p(\delta_1)).  
\end{gather*}
\end{enumerate} 
\end{defn}

Therefore, all considered properties are not related to the attractiveness of a set, but with visiting this set by trajectories during the interval of time $\Delta$ for ASTS or just at the end of the interval for AS and oAS. Next, the trajectories may leave the sets of interest.

\section{Problem Statement} \label{sec:Problem}

Consider the following system in a generalized Persidskii form
\cite{Persidskii1969}:
\begin{equation}
\begin{aligned}
\dot{x}(t) &=A_{0}x(t)+\sum_{j=1}^{M}A_{j}F_{j}(H_j x(t))+u(t),\;t\in [0,T], \\
y(t) &= Cx(t), \label{eq:main_system_Per} 
\end{aligned}
\end{equation}
where $x=[\begin{array}{ccc} x_{1} & \ldots &
    x_{n}\end{array}]^{\top}\in\mathbb{R}^{n}$ is the state,
$x(0)=x_0$; $y(t)\in\R^p$ is the output signal; $[0,T]$ is the
interval of interest for some $0<T<+\infty$;
$u\in\mathscr{L}_{\infty}^{n}$ is the exogenous input; $A_0\in
\mathbb{R}^{n \times n}$, $A_j\in \mathbb{R}^{n \times k_j}$, $H_j \in
\mathbb{R}^{k_j \times n}$ $\left(j \in\overline{1,M} \right)$, $C \in
\R^{p \times n}$ are constant matrices (to shorten further writing, we
define $k_0=n$ and $H_0=I_n$); $F_j:\mathbb{R}^{k_j} \to
\mathbb{R}^{k_j}$ are continuous functions,
\begin{gather*}
F_{j}(\ell^j)=[\begin{array}{ccc} f_{j}^{1}(\ell_{1}) & \ldots &
    f_{j}^{k_j}(\ell_{k_j})\end{array}]^{\top}\in\mathbb{R}^{k_j},\;\forall
j\in\overline{1,M}
\end{gather*}
with $\ell^j =[\begin{array}{ccc} \ell_{1} & \ldots &
    \ell_{k_j} \end{array}]^\top \in \R^{k_j}$.  Thus, the
nonlinearity $F_{j}$ has a special structure: each component
$f_{j}^{i}:\mathbb{R} \to \mathbb{R}$ of $F_{j}$ depends only on
$H_j^{(i)}x$, for $i \in\overline{1,k_j}$.

Sector restrictions on $F_{j} \left(j\in\overline{1,M} \right)$ are imposed in
the following assumption:
\begin{assum} \label{assu:main} 
Assume that for any  $j\in\overline{1,M}$ and $i \in\overline{1,k_j}$
\begin{equation*}
\nu f_{j}^{i}(\nu)>0,\ \forall\nu\in\mathbb{R}\backslash\{0\}.
\end{equation*}
\end{assum} 
\vspace{\baselineskip}

Under Assumption~\ref{assu:main}, with a reordering of nonlinearities and their
decomposition, there exists an index $c \in\overline{0,M}$ such that
for all $a\in\overline{1,c}$, $i\in\overline{1,k_a}$
\[
\lim_{\nu\rightarrow\pm\infty}f_{a}^{i}(\nu)=\pm\infty,
\]
and that there exists $\mu\in\overline{c,M}$ such that for all 
$b\in\overline{1,\mu}$, $i\in\overline{1,k_b}$
\[
\lim_{\nu\rightarrow\pm\infty}\int_{0}^{\nu}f_{b}^{i}(\tau)d \tau=+\infty.
\]
In this case, $c=0$ implies that all nonlinearities are bounded (at least for positive or negative argument). Clearly, $\mu \geq c$. 

\medskip

In this study, to perform the stability analysis, we also need a mild assumption of upper and
lower bounds on the integrals of the nonlinearities:

\begin{assum} \label{assu:integral_bound} Assume that for any 
$j$, $ j'\in\overline{1,M}$, $z\in\overline{j'+1,M}$, $i
  \in\overline{1,k_{j}}$, $i' \in\overline{1,n}$ and
  $\Lambda_j=\diag(\Lambda_j^1,...,\Lambda_j^{k_j}) \in \D_+^{k_j}$,
  there exist $\kappa_{0,j}^{i}$, $\kappa_{1,jj'}^{i}$,
  $\kappa_{2,jj'}^{i}$, $\kappa_{3,jj'z}^{i'}$, $\eta_{0,j}^{i}$,
  $\eta_{1,jj'}^{i}$, $\eta_{2,jj'}^{i}$, $\eta_{3,jj'z}^{i'} \geq0$,
  such that
\begin{gather*}
\mathrm{x}^\top H_j^\top \kappa_{0,j} H_j \mathrm{x}+\sum_{j'=1}^{M}f_{j'} (H_{j'} \mathrm{x})^\top \Bigg(\kappa_{1,jj'} f_{j'}(H_{j'}\mathrm{x}) + 2\kappa_{2,jj'} H_{j'} \mathrm{x}\Bigg) \\ 
+2 \sum_{j'=1}^{M} \sum_{z=j'+1}^{M} f_{j'}(H_{j'}\mathrm{x})^\top H_{j'} \cdot \kappa_{3,jj'z}  \cdot H_z^\top f_{z}(H_z\mathrm{x}) \\
\leq  \quad 2 {\bf{1}}_{k_j}^\top \Lambda_j \begin{bmatrix}  \int_{0}^{H_j^{(1)} \mathrm{x}}f_{j}^{1}(s)ds \\ \vdots \\ \int_{0}^{H_j^{(k_j)} \mathrm{x}}f_{j}^{k_j}(s)ds \end{bmatrix} \quad  \leq  \\
\mathrm{x}^\top H_j^\top \eta_{0,j} H_j  \mathrm{x}+\sum_{j'=1}^{M}f_{j'} (H_{j'} \mathrm{x})^\top \Bigg(\eta_{1,jj'} f_{j'}(H_{j'}\mathrm{x}) + 2\eta_{2,jj'} H_{j'} \mathrm{x}\Bigg) \\ 
+2 \sum_{j'=1}^{M} \sum_{z=j'+1}^{M} f_{j'}(H_{j'}\mathrm{x})^\top H_{j'} \cdot \eta_{3,jj'z}  \cdot H_z^\top f_{z}(H_z\mathrm{x})
\end{gather*}
for all $\mathrm{x} 
\in\mathbb{R}^{n}$, where \begin{gather*}
\kappa_{0,j}=\mathrm{diag}(\kappa_{0,j}^{1},...,\kappa_{0,j}^{k_j}),\;\kappa_{1,jj'}=\mathrm{diag}(\kappa_{1,jj'}^{1},...,\kappa_{1,jj'}^{k_j}), 
\\ \kappa_{2,jj'}=\mathrm{diag}(\kappa_{2,jj'}^{1},...,\kappa_{2,jj'}^{k_j}),\;\kappa_{3,jj'z}=\mathrm{diag}(\kappa_{3,jj'z}^{1},...,\kappa_{3,jj'z}^{n}), \\
\eta_{0,j}=\mathrm{diag}(\eta_{0,j}^{1},...,\eta_{0,j}^{k_j}),\;\eta_{1,jj'}=\mathrm{diag}(\eta_{1,jj'}^{1},...,\eta_{1,jj'}^{k_j}),
\\ \eta_{2,jj'}=\mathrm{diag}(\eta_{2,jj'}^{1},...,\eta_{2,jj'}^{k_j}),\;\eta_{3,jj'z}=\mathrm{diag}(\eta_{3,jj'z}^{1},...,\eta_{3,jj'z}^{n}).
\end{gather*} 
\end{assum}
\vspace{1em}

This hypothesis is satisfied by many nonlinear functions: sigmoid functions in neural networks; 
for
polynomials, for example, it is sufficient to select
$\kappa_{2,jj'}\ne0$ and $\eta_{2,jj'}\ne0$. 

In this work, if the upper bound is smaller than the lower one for an index, then the corresponding term (in sum or a sequence) must be omitted.

\section{Stability conditions} \label{sec:ASTS_condition_Per}

In this section, ASTS, STBNZ, AS and oAS sufficient conditions for the
generalized Persidskii system~\eqref{eq:main_system_Per} in the
presence of an essentially bounded input are formulated.

Following \cite{Efimov2021,Mei2020a,Mei2020b,Mei2021}, the stability
analysis of (2) can be performed using a Lyapunov function
\[
V(x)=x^\top Px+2\sum_{j \in \overline{1,M}}\sum_{i \in \overline{1,k_j}}\Lambda_j^i \int_{0}^{H_j^{(i)} x} f_j^i(\nu)d \nu,
\]
where $0 \leq P = P^\top\in\R^{n\times n}$ and
$\Lambda_j=\diag(\Lambda_j^1,...,\Lambda_j^n) \in \D_+^{k_j}$ are
tuning matrices. If they are selected in a way ensuring positive
definiteness of $V$ under Assumption~\ref{assu:main}, then there exist
$\alpha_1^{P,\Lambda_1,\dots,\Lambda_M},
\alpha_2^{P,\Lambda_1,\dots,\Lambda_M}\in\mathcal{K}_\infty$ such that
\[\alpha_1^{P,\Lambda_1,\dots,\Lambda_M}(\|x\|)\leq V(x)\leq\alpha_2^{P,\Lambda_1,\dots,\Lambda_M}(\|x\|)\]
for all $x\in \R^n$. For example, the upper bound can be always taken as
\begin{gather*}
\alpha_2^{P,\Lambda_1,\dots,\Lambda_M}(\tau) = \lambda_{\max}(P)
\tau^2 + 2 \left( \sum_{j=1}^{M} k_j \right)  \cdot \max_{j \in
  \overline{1,M}, i \in \overline{1,k_j}} \left\{ \Lambda_j^i
\int_{0}^{\rVert H_j^{(i)} \rVert \tau} f_j^i(\nu) \ d\nu \right\}.
\end{gather*}
These functions $\alpha_1^{P,\Lambda_1,\dots,\Lambda_M},
\alpha_2^{P,\Lambda_1,\dots,\Lambda_M}$ will be used later in the
proofs.

The system~\eqref{eq:main_system_Per} is highly nonlinear with
multiple nonlinearities, which makes the analysis of ASTS complicated as those nonlinearities may override the linear part readily and significantly influence the behavior  of $x(t)$. The following theorem presents ASTS conditions for
system~\eqref{eq:main_system_Per}, and it is the principal theoretical
contribution of the paper.

\bigskip

\begin{thm}
\label{theorem:ASTS_main_thm}  Let assumptions~\ref{assu:main} and~\ref{assu:integral_bound} be satisfied and $T >0$; $\gamma_{0} \geq 0$; $(\epsilon_1, \epsilon_2) \subseteq (\delta_1, \delta_2)  \subset \R_{+}$ be given. If there exist  $0 \leq \overline{P} = \overline{P}^\top, \; \underline{P} = \underline{P}^\top \in \R^{n \times n}$; $\left \{\overline{\Lambda}_j = \diag(\overline{\Lambda}_j^1,...,\overline{\Lambda}_j^{k_j}), \; \underline{\Lambda}_j = \diag(\underline{\Lambda}_j^1,...,\underline{\Lambda}_j^{k_j}\right\}_{j=1}^M \subset \mathbb{D}_{+}^{k_j}$; $\left\{\overline{\Upsilon}_{0,j},\; \underline{\Upsilon}_{0,j}\right\}_{j=1}^{M} \subset \mathbb{D}^{k_j}; \left\{\overline{\Upsilon}_{s,z},\; \underline{\Upsilon}_{s,z}\right\}_{1 \leq s < z \leq M} \subset \mathbb{D}^{n}$; $\left\{ \overline{\Omega}^j,\; \underline{\Omega}^j \right\}_{j=1}^M \subset \mathbb{R}^{n \times k_j}$; symmetric matrices $\overline{\Gamma},\; \underline{\Gamma},\;  \overline{\Psi},\; \underline{\Psi}  \in \mathbb{R}^{n \times n}$, $\left\{ \overline{\Xi}^s,\; \underline{\Xi}^s \right\}_{s=0}^M \subset \mathbb{R}^{k_s \times k_s}$; 
$\overline{\gamma}, \; \underline{\gamma}>0$ and  $\beta_1,\beta_2,\overline{\rho},\; \underline{\rho} \in \R$  such that 
\begin{align}
  & \underline{P} + \underline{\rho} \sum_{j=1}^{\mu} H_j^\top \underline{\Lambda}_j H_j >0,  \label{underline_V_estimate} \\
 & \underline{Q}=\underline{Q}^{\top} = \left(\underline{Q}_{a,\,b}\right)_{a,\,b=1}^{M+3} \geq 0,\label{underline_Q_restriction} \\
& 
\begin{cases} \label{gamma_underline_condi}
\underline{\gamma} \leq \frac{ e^{\beta_1T}\underline{\alpha}_1(\epsilon_1) - \underline{\alpha}_2(\delta_1)}{\frac{\gamma_0^2}{\beta_1} \left[ e^{\beta_1 T} -1 \right]}, & \text{if}\; \beta_1 < 0 \; \text{or} \; \beta_1 >0, \underline{\gamma} > \frac{\beta_1 }{\gamma_0^2} \underline{\alpha}_1(\epsilon_1),\\  
 \underline{\gamma} \leq \frac{   \underline{\alpha}_1(\epsilon_1) - \underline{\alpha}_2(\delta_1) }{T\gamma_0^2}, & \text{if}\; \beta_1 = 0,\\ 
  \underline{\alpha}_2(\delta_1)  \leq  \underline{\alpha}_1(\epsilon_1), & \text{if}\; \beta_1 >0, \underline{\gamma} \leq \frac{\beta_1 }{\gamma_0^2} \underline{\alpha}_1(\epsilon_1).
\end{cases} 
\\
&\nonumber \\
&   \begin{cases} \label{eq:beta_one_condi}
 \underline{\Lambda}_{j} \leq \Lambda_{j}, \\
-\underline{\Xi}^{0}\geq \beta_1\left(\underline{P}+\sum_{j=1}^{M}H_j^\top \eta_{0,j} H_j\right),
\\ 
-\underline{\Xi}^{j}\geq \beta_1 \sum_{j'=1}^{M}\eta_{1,jj'},\; & \text{if} \; \beta_1 \geq 0, \;  
\\ 
-\underline{\Upsilon}_{0,j}\geq \beta_1 \sum_{j'=1}^{M}\eta_{2,jj'}, \\
-\underline{\Upsilon}_{j,z} \geq \beta_1\sum_{j'=1}^{M}  \eta_{3,jj'z}, \\ \\
 \underline{\Lambda}_{j} \geq \Lambda_{j}, \\
 -\beta_1\left(\underline{P}+\sum_{j=1}^{M} H_j^\top \kappa_{0,j} H_j  \right) \geq\underline{\Xi}^{0}, \\ 
-\beta_1 \sum_{j'=1}^{M}\kappa_{1,jj'} \geq \underline{\Xi}^{j},\; & \text{if} \; \beta_1<0, \; 
\\ 
-\beta_1 \sum_{j'=1}^{M}\kappa_{2,jj'} \geq \underline{\Upsilon}_{0,j},
\\ 
-\beta_1 \sum_{j'=1}^{M}  \kappa_{3,jj'z} \geq \underline{\Upsilon}_{j,z}.
\end{cases}  
\end{align}
\begin{align}
&\overline{P} + \overline{\rho} \sum_{j=1}^{\mu}
  H_j^\top \overline{\Lambda}_j H_j >0, \label{overline_V_estimate} \\ &
  \overline{Q}=\overline{Q}^{\top}=
  \left(\overline{Q}_{a,\,b}\right)_{a,\,b=1}^{M+3}
  \leq0, \label{overline_Q_restriction}\\ &
   \begin{cases}\label{gamma_overline_condi} 
   \overline{\gamma} \leq
      \frac{ \overline{\alpha}_1(\delta_2) -e^{\beta_{2} T}
        \overline{\alpha}_2(\epsilon_2) }{\frac{\gamma_0^2}{\beta_2}
        \left[ e^{\beta_2 T} -1 \right]}, & \text{if} \; \beta_2 > 0 \; \text{or} \; \beta_2 <0,
      \overline{\gamma} > -\frac{\beta_2}{\gamma_0^2}, \\ 
      \overline{\gamma} \leq \frac{ \overline{\alpha}_1(\delta_2)
        -\overline{\alpha}_2(\epsilon_2) }{T\gamma_0^2}, & \text{if}\;
      \beta_2 = 0, \\ \overline{\alpha}_2(\epsilon_2) \leq
      \overline{\alpha}_1(\delta_2), & \text{if}\; \beta_2 <0,
      \overline{\gamma} \leq -\frac{\beta_2}{\gamma_0^2}
      \overline{\alpha}_2(\epsilon_2).
\end{cases} 
\\ &\nonumber
\end{align}
\begin{align}
& \begin{cases} \label{eq:beta_two_condi}
 \overline{\Lambda}_{j} \leq \Lambda_{j}, \\
  \overline{\Xi}^{0}\geq
  -\beta_2\left(\overline{P}+\sum_{j=1}^{M} H_j^\top \eta_{0,j} H_j \right),
  \\ \overline{\Xi}^{j}\geq
  -\beta_2 \sum_{j'=1}^{M}\eta_{1,jj'},\;
  & \text{if} \; \beta_2<0, \\ \overline{\Upsilon}_{0,j}\geq
  -\beta_2 \sum_{j'=1}^{M}\eta_{2,jj'},
  \\ \overline{\Upsilon}_{j,z}  \geq
  -\beta_2  \sum_{j'=1}^{M}  \eta_{3,jj'z}, \\  \\
   \overline{\Lambda}_{j} \geq \Lambda_{j}, \\
    \beta_2\left(\overline{P}+\sum_{j=1}^{M}  H_j^\top \kappa_{0,j} H_j \right)
  \geq -\overline{\Xi}^{0},
  \\ \beta_2 \sum_{j'=1}^{M}\kappa_{1,jj'} \geq
  -\overline{\Xi}^{j},\ & \text{if} \, \beta_2
  \geq 0,
  \\ \beta_2 \sum_{j'=1}^{M}\kappa_{2,jj'} \geq
  -\overline{\Upsilon}_{0,j},
  \\ \beta_2   \sum_{j'=1}^{M}  \kappa_{3,jj'z}  \geq
  - \overline{\Upsilon}_{j,z}. 
\end{cases} 
\end{align}
where
\begin{gather*}
\underline{\alpha}_1(s)=\alpha_1^{\underline{P},\underline{\Lambda}_1,\dots,\underline{\Lambda}_M}(s), \; \underline{\alpha}_2(s)=\alpha_2^{\underline{P},\underline{\Lambda}_1,\dots,\underline{\Lambda}_M}(s), \\
\overline{\alpha}_1(s)=\alpha_1^{\overline{P},\overline{\Lambda}_1,\dots,\overline{\Lambda}_M}(s), \;
\overline{\alpha}_2(s)=\alpha_2^{\overline{P},\overline{\Lambda}_1,\dots,\overline{\Lambda}_M}(s). \\ \\
\overline{Q}_{1,1}=-\overline{\Psi}^\top - \overline{\Psi}; \quad \overline{Q}_{1,2}= \overline{\Psi}^\top A_0+\overline{P}  -\overline{\Gamma}; \\
\overline{Q}_{1,j+2}=  \overline{\Omega}_j + H_j^\top \overline{\Lambda}_j + \overline{\Psi}^\top A_j; \quad \overline{Q}_{1,M+3} = \overline{\Psi}^\top; \\
\quad \overline{Q}_{2,2} = \overline{\Gamma}^\top A_0 + A_0^\top \overline{\Gamma} +  \overline{\Xi}_0; \quad \overline{Q}_{2,j+2}= \overline{\Gamma}^\top A_j +H_j^\top \overline{\Upsilon}_{0,j} -A_0^\top \overline{\Omega}_j, \\
 \overline{Q}_{2,M+3}= \overline{\Gamma}^\top; 
\quad \overline{Q}_{j+2,j+2}= -\overline{\Omega}_j^\top A_j-A_j^\top\overline{\Omega}_j + \overline{\Xi}^j, \\
\overline{Q}_{s+2,z'+2}=-\overline{\Omega}_s^\top A_{z'} -A_s^\top\overline{\Omega}_{z'} + H_s \overline{\Upsilon}_{s,z'} H_{z'}^\top, \\
\overline{Q}_{j+2,M+3} = -\overline{\Omega}_j^\top; \quad \overline{Q}_{M+3,M+3} = -\overline{\gamma} I_n. \\ \\ 
\underline{Q}_{1,1}=-\underline{\Psi}^\top - \underline{\Psi}; \quad \underline{Q}_{1,2}= \underline{\Psi}^\top A_0+ \underline{P}  - \underline{\Gamma}; \\
\underline{Q}_{1,j+2}=  \underline{\Omega}_j + H_j^\top \underline{\Gamma}^\top + \underline{\Psi}^\top A_j; \quad \underline{Q}_{1,M+3} = \underline{\Psi}^\top; \\
\quad \underline{Q}_{2,2} = \underline{\Gamma}^\top A_0 + A_0^\top \underline{\Gamma}  +  \underline{\Xi}_0; \quad \underline{Q}_{2,j+2}= \underline{\Gamma}^\top A_j +H_j^\top \underline{\Upsilon}_{0,j} -A_0^\top \underline{\Omega}_j; \\
\quad \underline{Q}_{2,M+3}= \underline{\Gamma}^\top; 
\quad \underline{Q}_{j+2,j+2}= -\underline{\Omega}_j^\top A_j -A_j^\top\underline{\Omega}_j +\underline{\Xi}^j, \\
\underline{Q}_{s+2,z'+2}=-\underline{\Omega}_s^\top A_{z'} -A_s^\top\underline{\Omega}_{z'} + H_s \underline{\Upsilon}_{s,z'} H_{z'}^\top, \\
\underline{Q}_{j+2,M+3} = -\underline{\Omega}_j^\top; \quad \underline{Q}_{M+3,M+3} = \underline{\gamma} I_n. 
\end{gather*}
\begin{gather*}
j,j'\in\overline{1,M};\; z\in\overline{j+1,M};  \; s\in\overline{1,M-1};\;z'\in\overline{s+1,M}, 
\end{gather*}
then  the system~\eqref{eq:main_system_Per} is ASTS with respect to $([0,T],\epsilon_1,\epsilon_2,\gamma_0,\delta_1,\delta_2)$.  
\end{thm}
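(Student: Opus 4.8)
The plan is to build a two-sided Lyapunov estimate. To keep the solution from leaving $B_n(\delta_2)$ I use
$\overline{V}(x)=x^\top\overline{P}x+2\sum_{j=1}^{M}\sum_{i=1}^{k_j}\overline{\Lambda}_j^{i}\int_{0}^{H_j^{(i)}x}f_j^{i}(\nu)\,d\nu$,
and to keep it from entering $B_n(\delta_1)$ I use
$\underline{V}(x)=x^\top\underline{P}x+2\sum_{j=1}^{M}\sum_{i=1}^{k_j}\underline{\Lambda}_j^{i}\int_{0}^{H_j^{(i)}x}f_j^{i}(\nu)\,d\nu$.
By Assumption~\ref{assu:main} each $\int_{0}^{s}f_j^{i}(\nu)\,d\nu$ is nonnegative and strictly increasing in $|s|$, so together with \eqref{underline_V_estimate} and \eqref{overline_V_estimate} the functions $\underline{V}$ and $\overline{V}$ are positive definite and radially unbounded --- exactly the situation described just before the theorem --- and hence $\underline{\alpha}_1(\|x\|)\le\underline{V}(x)\le\underline{\alpha}_2(\|x\|)$ and $\overline{\alpha}_1(\|x\|)\le\overline{V}(x)\le\overline{\alpha}_2(\|x\|)$ for all $x\in\R^{n}$, with the $\K_\infty$ functions $\underline{\alpha}_i,\overline{\alpha}_i$ named in the statement.

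First I derive two scalar differential inequalities along the solutions of \eqref{eq:main_system_Per}. Put
$\zeta=[\,\dot{x}^\top\ \ x^\top\ \ F_1(H_1x)^\top\ \cdots\ F_M(H_Mx)^\top\ \ u^\top\,]^\top$,
partitioned conformably with the $(M+3)\times(M+3)$ block structure of $\overline{Q},\underline{Q}$, and use $\frac{d}{dt}\int_{0}^{H_j^{(i)}x}f_j^{i}(\nu)\,d\nu=f_j^{i}(H_j^{(i)}x)\,H_j^{(i)}\dot{x}$. Expanding $\zeta^\top\overline{Q}\zeta$: the terms carrying $\overline{\Psi}$, $\overline{\Gamma}$ and the $\overline{\Omega}_j$ reduce to inner products with the residual $\dot{x}-A_0x-\sum_{j=1}^{M}A_jF_j(H_jx)-u$, which is zero along \eqref{eq:main_system_Per}; the terms carrying $\overline{P}$ and the $H_j^\top\overline{\Lambda}_j$ reassemble exactly $\dot{\overline{V}}$; and the $\overline{\Xi}^{0},\overline{\Xi}^{j},\overline{\Upsilon}_{0,j},\overline{\Upsilon}_{s,z}$ terms form a quadratic form $\Theta_{\overline{V}}(x)$ in $(x,F_1(H_1x),\dots,F_M(H_Mx))$, together with $-\overline{\gamma}\|u\|^{2}$ from the last block. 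Thus $\zeta^\top\overline{Q}\zeta=\dot{\overline{V}}+\Theta_{\overline{V}}(x)-\overline{\gamma}\|u\|^{2}$ along \eqref{eq:main_system_Per}, and \eqref{overline_Q_restriction} gives $\dot{\overline{V}}\le-\Theta_{\overline{V}}(x)+\overline{\gamma}\|u\|^{2}$. The identical computation with $\underline{Q}$, which is positive semidefinite by \eqref{underline_Q_restriction} and whose last block equals $+\underline{\gamma}I_n$, yields $\dot{\underline{V}}\ge-\Theta_{\underline{V}}(x)-\underline{\gamma}\|u\|^{2}$.

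Next I absorb the non-quadratic integral terms. Since every $\int_{0}^{s}f_j^{i}(\nu)\,d\nu\ge0$, the integral part of $\overline{V}$ is $\ge$ (resp.\ $\le$) the integral part of the Lyapunov function built with the matrices $\Lambda_j$ of Assumption~\ref{assu:integral_bound} whenever $\overline{\Lambda}_j\ge\Lambda_j$ (resp.\ $\overline{\Lambda}_j\le\Lambda_j$); and Assumption~\ref{assu:integral_bound} sandwiches that latter part between the $\kappa$-quadratic and the $\eta$-quadratic forms in $(x,F_1(H_1x),\dots,F_M(H_Mx))$. Taking the comparison $\overline{\Lambda}_j\ge\Lambda_j$ when $\beta_2\ge0$ and $\overline{\Lambda}_j\le\Lambda_j$ when $\beta_2<0$, as imposed in \eqref{eq:beta_two_condi}, one obtains a quadratic lower bound for $\beta_2\overline{V}(x)$ whose defining symmetric matrix, by the remaining inequalities of \eqref{eq:beta_two_condi}, dominates $-\Theta_{\overline{V}}(x)$; therefore $-\Theta_{\overline{V}}(x)\le\beta_2\overline{V}(x)$ and, for $\|u\|_{[0,T]}\le\gamma_0$,
\[
\dot{\overline{V}}(x(t))\ \le\ \beta_2\,\overline{V}(x(t))+\overline{\gamma}\gamma_0^{2}.
\]
Likewise, \eqref{eq:beta_one_condi} (with the $\Lambda_j$-comparisons split according to $\sign(\beta_1)$) gives $\dot{\underline{V}}(x(t))\ge\beta_1\,\underline{V}(x(t))-\underline{\gamma}\gamma_0^{2}$.

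Finally, I integrate the two inequalities by the comparison lemma on $[0,T]$. For $x_0\in\cl(B_n(\epsilon_2)\setminus B_n(\epsilon_1))$ one has $\underline{\alpha}_1(\epsilon_1)\le\underline{V}(x_0)$ and $\overline{V}(x_0)\le\overline{\alpha}_2(\epsilon_2)$, hence for all $t\in[0,T]$
\begin{gather*}
\overline{V}(x(t))\le e^{\beta_2 t}\overline{\alpha}_2(\epsilon_2)+\overline{\gamma}\gamma_0^{2}\!\int_{0}^{t}\!e^{\beta_2(t-\tau)}\,d\tau,\\
\underline{V}(x(t))\ge e^{\beta_1 t}\underline{\alpha}_1(\epsilon_1)-\underline{\gamma}\gamma_0^{2}\!\int_{0}^{t}\!e^{\beta_1(t-\tau)}\,d\tau.
\end{gather*}
Evaluating these right-hand sides, and in the sub-cases where the exponential does not dominate the constant forcing (so one must check whether the bound is monotone on $[0,T]$, its worst value occurring at $t=0$ or at $t=T$), one sees that \eqref{gamma_overline_condi} is exactly what guarantees $\overline{V}(x(t))\le\overline{\alpha}_1(\delta_2)$ on $[0,T]$ and that \eqref{gamma_underline_condi} is exactly what guarantees $\underline{V}(x(t))\ge\underline{\alpha}_2(\delta_1)$ on $[0,T]$. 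Then $\overline{\alpha}_1(\|x(t)\|)\le\overline{V}(x(t))\le\overline{\alpha}_1(\delta_2)$ forces $\|x(t)\|\le\delta_2$ and $\underline{\alpha}_2(\|x(t)\|)\ge\underline{V}(x(t))\ge\underline{\alpha}_2(\delta_1)$ forces $\|x(t)\|\ge\delta_1$; hence $x(t,x_0,u)\in\cl(B_n(\delta_2)\setminus B_n(\delta_1))$ for all $t\in[0,T]$, i.e.\ \eqref{eq:main_system_Per} is ASTS with respect to $([0,T],\epsilon_1,\epsilon_2,\gamma_0,\delta_1,\delta_2)$. The step I expect to be the main obstacle is the third one: matching the long list of matrix inequalities \eqref{eq:beta_one_condi}--\eqref{eq:beta_two_condi}, together with Assumption~\ref{assu:integral_bound}, to semidefiniteness of the relevant quadratic forms while keeping the various $\sign(\beta_1)$/$\sign(\beta_2)$ cases and the associated $\overline{\Lambda}_j,\underline{\Lambda}_j$ versus $\Lambda_j$ swaps consistent; this is bookkeeping-heavy rather than deep. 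The second delicate point is the lower estimate on $\underline{V}$, which is atypical for short-time stability --- where usually only an upper bound is produced --- and is what underlies the inner radius $\delta_1$.
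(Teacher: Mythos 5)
Your proposal is correct and follows essentially the same route as the paper's proof: the same pair of Lyapunov functions $\overline{V},\underline{V}$, the descriptor-method expansion of $\zeta^\top\overline{Q}\zeta$ and $\zeta^\top\underline{Q}\zeta$ along the trajectories, absorption of the $\overline{\Xi},\overline{\Upsilon}$ (resp.\ $\underline{\Xi},\underline{\Upsilon}$) forms via Assumption~\ref{assu:integral_bound} and the sign-dependent $\Lambda_j$ comparisons in \eqref{eq:beta_one_condi}--\eqref{eq:beta_two_condi}, then comparison-lemma integration with the worst-case-in-$t$ analysis matched to \eqref{gamma_underline_condi} and \eqref{gamma_overline_condi}. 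No substantive deviation or gap to report.
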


The idea of the proof of this theorem is to consider two Lyapunov functions, $\overline{V}$
and $\underline{V}$, whose upper and lower estimates, respectively, provide the corresponding bounds for the behavior of the state $\|x(t)\|$ (this explains why there are two sets of conditions).

Considering the computation of expressions for $\dot{\overline{V}}$
and $\dot{\underline{V}}$ in the proof of
Theorem~\ref{theorem:ASTS_main_thm}, the term $\dot{x}$ in
$(\dot{x}^{\top} \overline{P} x+x^{\top} \overline{P} \dot{x})$ can
also be expanded using~\eqref{eq:main_system_Per}. As a result, the
terms $A_0^\top \underline{P}+\underline{P}A_0$ and $A_0^\top
\overline{P}+\overline{P}A_0$ will appear in the elements
$\underline{Q}_{2,2}$ and $\overline{Q}_{2,2}$, respectively (together
with other corresponding modifications). Depending on the properties
of $A_0$, such a substitution may provide more possibilities for the
LMIs solution in Theorem \ref{theorem:ASTS_main_thm}.

One shall note that the conditions \eqref{underline_V_estimate}--\eqref{eq:beta_one_condi} and the ones \eqref{overline_V_estimate}--\eqref{eq:beta_two_condi}  are independent of each other, which means that the computational complexity of the LMIs is not high comparing with their variations successfully solved in the literature as in \cite{mei2022nonlinear}, for example. Also, in the given conditions, the selections of the parameters $\epsilon_1, \epsilon_2, \delta_1, \delta_2$ are per the practical demands on the behavior of the solution of the system~\eqref{eq:main_system_Per}, $\beta_1$ and $\beta_2$ can take
positive or negative values, and $\gamma_0$ should be not smaller than $\rVert
  u\rVert_{[0,T]}$ (for practical verification, one can select them based on error and trial method). Taking into account the general shape of the functions
$\underline{V}$ and $\overline{V}$ used for establishing stable or
unstable bounding compartments, one shall see that the corresponding conditions are not restrictive and under proper selections of the tuning parameters, the given
LMIs can be satisfied.

For the formulation of the STBNZ conditions
for system~\eqref{eq:main_system_Per}, we set $\epsilon_1 = \delta_1 =0
$ in Theorem~\ref{theorem:ASTS_main_thm} and obtain the following
corollary:

\begin{cor} \label{cor:STBNZ_condi}
Assume that all conditions of Theorem~\ref{theorem:ASTS_main_thm} are
satisfied under the substitutions $\epsilon_1 \rightarrow 0, \delta_1
\rightarrow 0$, and the eliminations
of~\eqref{underline_V_estimate}--\eqref{eq:beta_one_condi}. Then
system~\eqref{eq:main_system_Per} is STBNZ with respect to
$\left([0,T],\epsilon_2,\gamma_0,\delta_2 \right)$.
\end{cor}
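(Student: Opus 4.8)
The plan is to notice that the asserted STBNZ property is exactly the $\epsilon_1=\delta_1=0$ instance of ASTS. Since $B_n(0)=\{0\}$, one has $\cl(B_n(\epsilon_2)\setminus B_n(0))=B_n(\epsilon_2)$ and $\cl(B_n(\delta_2)\setminus B_n(0))=B_n(\delta_2)$, so the statement of ASTS with respect to $([0,T],0,\epsilon_2,\gamma_0,0,\delta_2)$ becomes exactly: $\|x_0\|\le\epsilon_2$ and $\|u\|_{[0,T]}\le\gamma_0$ imply $\|x(t,x_0,u)\|\le\delta_2$ for all $t\in[0,T]$, which is STBNZ with respect to $([0,T],\epsilon_2,\gamma_0,\delta_2)$. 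Hence it suffices to carry out only the upper-bound part of the reasoning underlying Theorem~\ref{theorem:ASTS_main_thm}, namely the part that yields $\|x(t)\|\le\delta_2$; the complementary bound $\|x(t)\|\ge\delta_1=0$ is vacuous, which is precisely why the conditions \eqref{underline_V_estimate}--\eqref{eq:beta_one_condi}, feeding only the lower estimate, may be discarded.

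Concretely, I would fix $x_0$ with $\|x_0\|\le\epsilon_2$ and $u\in\mathscr{L}_{\infty}^{n}$ with $\|u\|_{[0,T]}\le\gamma_0$, write $x(\cdot)=x(\cdot,x_0,u)$ (well defined and absolutely continuous on $[0,T]$ by hypothesis), and work with $\overline{V}(x)=x^\top\overline{P}x+2\sum_{j=1}^{M}\sum_{i=1}^{k_j}\overline{\Lambda}_j^i\int_0^{H_j^{(i)}x}f_j^i(\nu)\,d\nu$. By Assumption~\ref{assu:main} every such integral is nonnegative, and \eqref{overline_V_estimate} (through the auxiliary scalar $\overline{\rho}$) renders $\overline{V}$ positive definite, so that $\overline{\alpha}_1(\|x\|)\le\overline{V}(x)\le\overline{\alpha}_2(\|x\|)$ with $\overline{\alpha}_1,\overline{\alpha}_2\in\mathcal{K}_{\infty}$ the functions introduced before the theorem.

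The crux is the dissipation inequality $\frac{d}{dt}\overline{V}(x(t))\le\beta_2\overline{V}(x(t))+\overline{\gamma}\|u(t)\|^2$ for a.e.\ $t\in[0,T]$. To obtain it I would compute $\nabla\overline{V}(x)\dot x=2\dot x^\top\overline{P}x+2\sum_j\dot x^\top H_j^\top\overline{\Lambda}_jF_j(H_jx)$, introduce the stacked vector $\overline{\chi}=[\dot x^\top,\,x^\top,\,F_1(H_1x)^\top,\dots,F_M(H_Mx)^\top,\,u^\top]^\top$, and observe that after $\dot x$ is eliminated via \eqref{eq:main_system_Per} every contribution of the slack matrices $\overline{\Psi},\overline{\Gamma},\overline{\Omega}_j$ to $\overline{\chi}^\top\overline{Q}\,\overline{\chi}$ vanishes, as each such block is multiplied against the relation $\dot x-A_0x-\sum_jA_jF_j(H_jx)-u=0$; what remains in $\overline{\chi}^\top\overline{Q}\,\overline{\chi}$ is $\nabla\overline{V}(x)\dot x-\overline{\gamma}\|u\|^2$ together with the $x,F_j$-quadratic terms carried by $\overline{\Xi}^s,\overline{\Upsilon}_{0,j},\overline{\Upsilon}_{s,z}$. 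Into those terms I would absorb $-\beta_2$ times the integral part of $\overline{V}$: for $\beta_2\ge0$ using $\overline{\Lambda}_j\ge\Lambda_j$, nonnegativity of the integrals, and the lower sandwich of Assumption~\ref{assu:integral_bound}; for $\beta_2<0$ using $\overline{\Lambda}_j\le\Lambda_j$ and the upper sandwich. In either regime the sign-split constraints \eqref{eq:beta_two_condi} are precisely what force the resulting quadratic form to be dominated by $\overline{\chi}^\top\overline{Q}\,\overline{\chi}$, whence $\nabla\overline{V}(x)\dot x-\beta_2\overline{V}(x)-\overline{\gamma}\|u\|^2\le\overline{\chi}^\top\overline{Q}\,\overline{\chi}\le0$ by \eqref{overline_Q_restriction}. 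I expect this to be the genuinely delicate step: the block-by-block coefficient matching between $\nabla\overline{V}(x)\dot x$, the sandwich bounds of Assumption~\ref{assu:integral_bound}, the constraints \eqref{eq:beta_two_condi}, and the entries of $\overline{Q}$.

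Finally I would integrate the dissipation inequality. Since $\|u(t)\|\le\gamma_0$ a.e.\ and $\overline{\gamma}>0$, the comparison principle gives $\overline{V}(x(t))\le e^{\beta_2 t}\overline{V}(x_0)+\frac{\overline{\gamma}\gamma_0^2}{\beta_2}(e^{\beta_2 t}-1)$ for $t\in[0,T]$ when $\beta_2\ne0$, and $\overline{V}(x(t))\le\overline{V}(x_0)+\overline{\gamma}\gamma_0^2 t$ when $\beta_2=0$. Using $\overline{V}(x_0)\le\overline{\alpha}_2(\|x_0\|)\le\overline{\alpha}_2(\epsilon_2)$ and bounding the right-hand side by its maximum over $[0,T]$ — whose location (at $t=T$ or at $t=0$) is dictated by the sign of $\beta_2\overline{\alpha}_2(\epsilon_2)+\overline{\gamma}\gamma_0^2$, which is exactly the case split in \eqref{gamma_overline_condi} — I would use \eqref{gamma_overline_condi}, which is tailored so that this maximum is at most $\overline{\alpha}_1(\delta_2)$. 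Then $\overline{\alpha}_1(\|x(t)\|)\le\overline{V}(x(t))\le\overline{\alpha}_1(\delta_2)$ forces $\|x(t)\|\le\delta_2$ for all $t\in[0,T]$, which is the claimed STBNZ property. Beyond this, the only genuinely corollary-specific ingredient is the observation in the first paragraph that $\epsilon_1=\delta_1=0$ collapses the two annuli to balls and trivializes the lower bound.
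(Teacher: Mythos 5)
Your proof is correct and follows essentially the same route as the paper: the lower-bound requirement becomes vacuous when $\delta_1=\epsilon_1=0$ (since $\|x\|\geq 0$ always), so conditions \eqref{underline_V_estimate}--\eqref{eq:beta_one_condi} can be dropped and only the $\overline{V}$-based upper estimate from Theorem~\ref{theorem:ASTS_main_thm} is needed. The paper states this in one line by appealing to the theorem; you merely inline the corresponding upper-bound argument, which adds detail but no new idea.
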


\begin{proof}
For STBNZ property, the conditions~\eqref{underline_V_estimate}--\eqref{eq:beta_one_condi}  under the restrictions of Corollary~\ref{cor:STBNZ_condi} can be omitted due to $\rVert x\rVert \geq 0$ for $ x \in \R^n$.  
\end{proof}

As it follows from definitions \ref{def:IOS} and \ref{def:AS_oAS}, the requirements of AS are much weaker than those of ASTS (the constraints on the state norm are imposed only for $t=T$). At the same time, the former notion is sufficient for investigating the classification problem in neural networks. The following corollary formulates the sufficient conditions for the
AS property of \eqref{eq:main_system_Per}, which will be used in the
next section for an application to recurrent neural networks.
 
\begin{cor} \label{cor_AS}
 Let assumptions~\ref{assu:main} and~\ref{assu:integral_bound} be satisfied and $T >0$; $\gamma_{0} \geq 0$; $ (\epsilon_1, \epsilon_2), \; (\delta_1, \delta_2)   \subset \R_{+}$ be given. If there exist  $0 \leq \overline{P} = \overline{P}^\top, \; \underline{P} = \underline{P}^\top \in \R^{n \times n}$; $\left \{\overline{\Lambda}_j = \diag(\overline{\Lambda}_j^1,...,\overline{\Lambda}_j^{k_j}), \; \underline{\Lambda}_j = \diag(\underline{\Lambda}_j^1,...,\underline{\Lambda}_j^{k_j}\right\}_{j=1}^M \subset \mathbb{D}_{+}^{k_j}$; $\left\{\overline{\Upsilon}_{0,j},\; \underline{\Upsilon}_{0,j}\right\}_{j=1}^{M} \subset \mathbb{D}^{k_j}; \left\{\overline{\Upsilon}_{s,z},\; \underline{\Upsilon}_{s,z}\right\}_{1 \leq s < z \leq M} \subset \mathbb{D}^{n}$; $\left\{ \overline{\Omega}^j,\; \underline{\Omega}^j \right\}_{j=1}^M \subset \mathbb{R}^{n \times k_j}$; symmetric matrices $\overline{\Gamma},\; \underline{\Gamma},\;  \overline{\Psi},\; \underline{\Psi}  \in \mathbb{R}^{n \times n}$, $\left\{ \overline{\Xi}^s,\; \underline{\Xi}^s \right\}_{s=0}^M \subset \mathbb{R}^{k_s \times k_s}$; 
$\overline{\gamma}, \; \underline{\gamma}>0$ and  $\beta_1,\beta_2,\overline{\rho},\; \underline{\rho} \in \R$  such that 
\begin{align}
  & \underline{P} + \underline{\rho} \sum_{j=1}^{\mu} H_j^\top \underline{\Lambda}_j H_j >0,  \nonumber  \\
 & \underline{Q}=\underline{Q}^{\top} = \left(\underline{Q}_{a,\,b}\right)_{a,\,b=1}^{M+3} \geq 0,\nonumber \\
& 
\begin{cases} \underline{\gamma} \leq \frac{ e^{\beta_1T}\underline{\alpha}_1(\epsilon_1) - \underline{\alpha}_2(\delta_1)}{\frac{\gamma_0^2}{\beta_1} \left[ e^{\beta_1 T} -1 \right]},  & \text{if}\; \beta_1 \neq 0,  \\  \underline{\gamma} \leq \frac{   \underline{\alpha}_1(\epsilon_1) - \underline{\alpha}_2(\delta_1) }{T\gamma_0^2}, & \text{if}\; \beta_1 = 0, \end{cases} \nonumber
\\
&   \begin{cases} 
 \underline{\Lambda}_{j} \leq \Lambda_{j}, \\
-\underline{\Xi}^{0}\geq \beta_1\left(\underline{P}+\sum_{j=1}^{M}H_j^\top \eta_{0,j} H_j\right),
\\ 
-\underline{\Xi}^{j}\geq \beta_1 \sum_{j'=1}^{M}\eta_{1,jj'},\; & \text{if} \; \beta_1 \geq 0, \;  
\\ 
-\underline{\Upsilon}_{0,j}\geq \beta_1 \sum_{j'=1}^{M}\eta_{2,jj'}, \\
-\underline{\Upsilon}_{j,z} \geq \beta_1\sum_{j'=1}^{M}  \eta_{3,jj'z}, \\ \\
 \underline{\Lambda}_{j} \geq \Lambda_{j}, \\
 -\beta_1\left(\underline{P}+\sum_{j=1}^{M} H_j^\top \kappa_{0,j} H_j  \right) \geq\underline{\Xi}^{0}, \\ 
-\beta_1 \sum_{j'=1}^{M}\kappa_{1,jj'} \geq \underline{\Xi}^{j},\; & \text{if} \; \beta_1<0, \; \nonumber
\\ 
-\beta_1 \sum_{j'=1}^{M}\kappa_{2,jj'} \geq \underline{\Upsilon}_{0,j},
\\ 
-\beta_1 \sum_{j'=1}^{M}  \kappa_{3,jj'z} \geq \underline{\Upsilon}_{j,z}.
\end{cases}  
\end{align}
\begin{align}
&\overline{P} + \overline{\rho} \sum_{j=1}^{\mu}
  H_j^\top \overline{\Lambda}_j H_j >0, \nonumber  \\ &
  \overline{Q}=\overline{Q}^{\top}=
  \left(\overline{Q}_{a,\,b}\right)_{a,\,b=1}^{M+3}
  \leq0, \nonumber\\ 
  & \begin{cases} \overline{\gamma} \leq
      \frac{ \overline{\alpha}_1(\delta_2) -e^{\beta_{2} T}
        \overline{\alpha}_2(\epsilon_2) }{\frac{\gamma_0^2}{\beta_2}
        \left[ e^{\beta_2 T} -1 \right]}, & \text{if}\;
      \beta_2 \neq 0,   \\  \overline{\gamma} \leq \frac{ \overline{\alpha}_1(\delta_2)
        -\overline{\alpha}_2(\epsilon_2) }{T\gamma_0^2}, & \text{if}\;
      \beta_2 = 0, \end{cases} \nonumber
\\ 
& \begin{cases} \nonumber
 \overline{\Lambda}_{j} \leq \Lambda_{j}, \\
  \overline{\Xi}^{0}\geq
  -\beta_2\left(\overline{P}+\sum_{j=1}^{M} H_j^\top \eta_{0,j} H_j \right),
  \\ \overline{\Xi}^{j}\geq
  -\beta_2 \sum_{j'=1}^{M}\eta_{1,jj'},\;
  & \text{if} \; \beta_2<0, \\ \overline{\Upsilon}_{0,j}\geq
  -\beta_2 \sum_{j'=1}^{M}\eta_{2,jj'},
  \\ \overline{\Upsilon}_{j,z}  \geq
  -\beta_2  \sum_{j'=1}^{M}  \eta_{3,jj'z}, \\  \\
   \overline{\Lambda}_{j} \geq \Lambda_{j}, \\
    \beta_2\left(\overline{P}+\sum_{j=1}^{M}  H_j^\top \kappa_{0,j} H_j \right)
  \geq -\overline{\Xi}^{0},
  \\ \beta_2 \sum_{j'=1}^{M}\kappa_{1,jj'} \geq
  -\overline{\Xi}^{j},\ & \text{if} \, \beta_2
  \geq 0,
  \\ \beta_2 \sum_{j'=1}^{M}\kappa_{2,jj'} \geq
  -\overline{\Upsilon}_{0,j},
  \\ \beta_2   \sum_{j'=1}^{M}  \kappa_{3,jj'z}  \geq
  - \overline{\Upsilon}_{j,z}. 
\end{cases} 
\end{align}
where
\begin{gather*}
\underline{\alpha}_1(s)=\alpha_1^{\underline{P},\underline{\Lambda}_1,\dots,\underline{\Lambda}_M}(s), \; \underline{\alpha}_2(s)=\alpha_2^{\underline{P},\underline{\Lambda}_1,\dots,\underline{\Lambda}_M}(s), \\
\overline{\alpha}_1(s)=\alpha_1^{\overline{P},\overline{\Lambda}_1,\dots,\overline{\Lambda}_M}(s), \;
\overline{\alpha}_2(s)=\alpha_2^{\overline{P},\overline{\Lambda}_1,\dots,\overline{\Lambda}_M}(s). \\ \\
\overline{Q}_{1,1}=-\overline{\Psi}^\top - \overline{\Psi}; \quad \overline{Q}_{1,2}= \overline{\Psi}^\top A_0+\overline{P}  -\overline{\Gamma}; \\
\overline{Q}_{1,j+2}=  \overline{\Omega}_j + H_j^\top \overline{\Lambda}_j + \overline{\Psi}^\top A_j; \quad \overline{Q}_{1,M+3} = \overline{\Psi}^\top; \\
\quad \overline{Q}_{2,2} = \overline{\Gamma}^\top A_0 + A_0^\top \overline{\Gamma} +  \overline{\Xi}_0; \quad \overline{Q}_{2,j+2}= \overline{\Gamma}^\top A_j +H_j^\top \overline{\Upsilon}_{0,j} -A_0^\top \overline{\Omega}_j, \\
 \overline{Q}_{2,M+3}= \overline{\Gamma}^\top; 
\quad \overline{Q}_{j+2,j+2}= -\overline{\Omega}_j^\top A_j-A_j^\top\overline{\Omega}_j + \overline{\Xi}^j, \\
\overline{Q}_{s+2,z'+2}=-\overline{\Omega}_s^\top A_{z'} -A_s^\top\overline{\Omega}_{z'} + H_s \overline{\Upsilon}_{s,z'} H_{z'}^\top, \\
\overline{Q}_{j+2,M+3} = -\overline{\Omega}_j^\top; \quad \overline{Q}_{M+3,M+3} = -\overline{\gamma} I_n. \\ \\ 
\underline{Q}_{1,1}=-\underline{\Psi}^\top - \underline{\Psi}; \quad \underline{Q}_{1,2}= \underline{\Psi}^\top A_0+ \underline{P}  - \underline{\Gamma}; \\
\underline{Q}_{1,j+2}=  \underline{\Omega}_j + H_j^\top \underline{\Lambda}_j^\top + \underline{\Psi}^\top A_j; \quad \underline{Q}_{1,M+3} = \underline{\Psi}^\top; \\
\quad \underline{Q}_{2,2} = \underline{\Gamma}^\top A_0 + A_0^\top \underline{\Gamma}  +  \underline{\Xi}_0; \quad \underline{Q}_{2,j+2}= \underline{\Gamma}^\top A_j +H_j^\top \underline{\Upsilon}_{0,j} -A_0^\top \underline{\Omega}_j; \\
\quad \underline{Q}_{2,M+3}= \underline{\Gamma}^\top; 
\quad \underline{Q}_{j+2,j+2}= -\underline{\Omega}_j^\top A_j -A_j^\top\underline{\Omega}_j +\underline{\Xi}^j, \\
\underline{Q}_{s+2,z'+2}=-\underline{\Omega}_s^\top A_{z'} -A_s^\top\underline{\Omega}_{z'} + H_s \underline{\Upsilon}_{s,z'} H_{z'}^\top, \\
\underline{Q}_{j+2,M+3} = -\underline{\Omega}_j^\top; \quad \underline{Q}_{M+3,M+3} = \underline{\gamma} I_n. 
\end{gather*}
\begin{gather*}
j,j'\in\overline{1,M};\; z\in\overline{j+1,M};  \; s\in\overline{1,M-1};\;z'\in\overline{s+1,M}, 
\end{gather*}  
then  system~\eqref{eq:main_system_Per} is AS with respect to $(T,\epsilon_1,\epsilon_2,\gamma_0,\delta_1,\delta_2)$.  
\end{cor}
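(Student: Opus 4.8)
The plan is to reuse the barrier-function machinery behind Theorem~\ref{theorem:ASTS_main_thm} and to take advantage of the fact that AS only constrains the state at the single instant $t=T$, so the two estimates must be propagated but evaluated only at the end of the interval. First I would attach to the two sets of tuning matrices the functions $\overline{V}(x)=x^{\top}\overline{P}x+2\sum_{j=1}^{M}\sum_{i=1}^{k_{j}}\overline{\Lambda}_{j}^{i}\int_{0}^{H_{j}^{(i)}x}f_{j}^{i}(\nu)\,d\nu$ and $\underline{V}(x)=x^{\top}\underline{P}x+2\sum_{j=1}^{M}\sum_{i=1}^{k_{j}}\underline{\Lambda}_{j}^{i}\int_{0}^{H_{j}^{(i)}x}f_{j}^{i}(\nu)\,d\nu$. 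Under Assumption~\ref{assu:main}, the conditions $\overline{P}+\overline{\rho}\sum_{j=1}^{\mu}H_{j}^{\top}\overline{\Lambda}_{j}H_{j}>0$ and $\underline{P}+\underline{\rho}\sum_{j=1}^{\mu}H_{j}^{\top}\underline{\Lambda}_{j}H_{j}>0$ render $\overline{V},\underline{V}$ positive definite and radially unbounded, which is what is needed for the class-$\mathscr{K}_{\infty}$ estimates $\overline{\alpha}_{1}(\|x\|)\le\overline{V}(x)\le\overline{\alpha}_{2}(\|x\|)$ and $\underline{\alpha}_{1}(\|x\|)\le\underline{V}(x)\le\underline{\alpha}_{2}(\|x\|)$ (with the functions named in the statement) to hold for every $x\in\R^{n}$.

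The core step, which I would carry over essentially from the proof of Theorem~\ref{theorem:ASTS_main_thm}, is to convert the matrix inequalities into the pair of differential inequalities $\dot{\overline{V}}(x(t))\le\beta_{2}\overline{V}(x(t))+\overline{\gamma}\|u(t)\|^{2}$ and $\dot{\underline{V}}(x(t))\ge\beta_{1}\underline{V}(x(t))-\underline{\gamma}\|u(t)\|^{2}$, holding for almost every $t\in[0,T]$ along any solution of \eqref{eq:main_system_Per} with $\|u\|_{[0,T]}\le\gamma_{0}$. This is done by forming the augmented vector $\zeta=[\dot{x}^{\top}\ x^{\top}\ F_{1}(H_{1}x)^{\top}\ \cdots\ F_{M}(H_{M}x)^{\top}\ u^{\top}]^{\top}$, substituting $\dot{x}=A_{0}x+\sum_{j}A_{j}F_{j}(H_{j}x)+u$, cancelling the contributions weighted by the slack matrices $\overline{\Psi},\overline{\Gamma},\overline{\Omega}_{j}$ (respectively the underlined ones), which disappear identically once the dynamics are inserted, and identifying the remaining quadratic form with $\dot{\overline{V}}(x)-\overline{\gamma}\|u\|^{2}$ up to terms carrying $\overline{\Xi}^{s},\overline{\Upsilon}_{0,j},\overline{\Upsilon}_{s,z}$; those terms are then dominated by $\beta_{2}\overline{V}(x)$ by using the integral bounds $\kappa_{\bullet},\eta_{\bullet}$ of Assumption~\ref{assu:integral_bound}, the orderings $\overline{\Lambda}_{j}\le\Lambda_{j}$ or $\overline{\Lambda}_{j}\ge\Lambda_{j}$, Assumption~\ref{assu:main} (through $\nu f_{j}^{i}(\nu)\ge0$), and the sign-dependent constraints in the last block of hypotheses, with a case split on the sign of $\beta_{2}$ (and symmetrically for the lower inequality, with the inequality senses reversed). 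I expect this S-procedure bookkeeping to be the only genuinely technical part of the argument; since the Corollary uses the same $\overline{Q},\underline{Q}$ and the same sign conditions as the Theorem, I would simply invoke that derivation rather than repeat it.

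With the two inequalities available, the conclusion follows by a standard integrating-factor (comparison) argument. Since $t\mapsto x(t,x_{0},u)$ is absolutely continuous, so are $t\mapsto\overline{V}(x(t,x_{0},u))$ and $t\mapsto\underline{V}(x(t,x_{0},u))$, and integration over $[0,T]$ gives $\overline{V}(x(T,x_{0},u))\le e^{\beta_{2}T}\overline{V}(x_{0})+\overline{\gamma}\int_{0}^{T}e^{\beta_{2}(T-\tau)}\|u(\tau)\|^{2}\,d\tau$ and $\underline{V}(x(T,x_{0},u))\ge e^{\beta_{1}T}\underline{V}(x_{0})-\underline{\gamma}\int_{0}^{T}e^{\beta_{1}(T-\tau)}\|u(\tau)\|^{2}\,d\tau$. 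From $x_{0}\in\cl(B_{n}(\epsilon_{2})\setminus B_{n}(\epsilon_{1}))$ I use $\underline{\alpha}_{1}(\epsilon_{1})\le\underline{\alpha}_{1}(\|x_{0}\|)\le\underline{V}(x_{0})$ and $\overline{V}(x_{0})\le\overline{\alpha}_{2}(\|x_{0}\|)\le\overline{\alpha}_{2}(\epsilon_{2})$, and from $\|u\|_{[0,T]}\le\gamma_{0}$ I bound $\int_{0}^{T}e^{\beta(T-\tau)}\|u(\tau)\|^{2}\,d\tau$ by $\gamma_{0}^{2}\frac{e^{\beta T}-1}{\beta}$ when $\beta\neq0$ (this quantity being strictly positive for every nonzero $\beta$) and by $\gamma_{0}^{2}T$ when $\beta=0$; the degenerate case $\gamma_{0}=0$ makes the forcing term vanish and is handled directly. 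Substituting these into the hypotheses on $\overline{\gamma}$ and $\underline{\gamma}$, which after clearing the positive factor $\frac{\gamma_{0}^{2}}{\beta}(e^{\beta T}-1)$ (or $\gamma_{0}^{2}T$) read $e^{\beta_{2}T}\overline{\alpha}_{2}(\epsilon_{2})+\overline{\gamma}\gamma_{0}^{2}\frac{e^{\beta_{2}T}-1}{\beta_{2}}\le\overline{\alpha}_{1}(\delta_{2})$ and $e^{\beta_{1}T}\underline{\alpha}_{1}(\epsilon_{1})-\underline{\gamma}\gamma_{0}^{2}\frac{e^{\beta_{1}T}-1}{\beta_{1}}\ge\underline{\alpha}_{2}(\delta_{1})$ (and analogously for $\beta_{i}=0$), I obtain $\overline{\alpha}_{1}(\|x(T,x_{0},u)\|)\le\overline{V}(x(T,x_{0},u))\le\overline{\alpha}_{1}(\delta_{2})$ and $\underline{\alpha}_{2}(\|x(T,x_{0},u)\|)\ge\underline{V}(x(T,x_{0},u))\ge\underline{\alpha}_{2}(\delta_{1})$. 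Strict monotonicity of $\overline{\alpha}_{1},\underline{\alpha}_{2}\in\mathscr{K}_{\infty}$ then yields $\delta_{1}\le\|x(T,x_{0},u)\|\le\delta_{2}$, i.e. $x(T,x_{0},u)\in\cl(B_{n}(\delta_{2})\setminus B_{n}(\delta_{1}))$, which is the AS property with respect to $(T,\epsilon_{1},\epsilon_{2},\gamma_{0},\delta_{1},\delta_{2})$.

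Finally I would note why the hypotheses are milder than in Theorem~\ref{theorem:ASTS_main_thm}: because only the terminal value has to lie in the annulus, the bounding functions $t\mapsto e^{\beta_{i}t}\alpha(\cdot)+\cdots$ need not be controlled for intermediate $t\in[0,T)$, so the monotonicity-in-$t$ discussion that produces the third cases of the $\overline{\gamma}$- and $\underline{\gamma}$-conditions of Theorem~\ref{theorem:ASTS_main_thm}, as well as the nesting requirement $(\epsilon_{1},\epsilon_{2})\subseteq(\delta_{1},\delta_{2})$, are not needed here, which is the only structural change relative to the proof of that theorem.
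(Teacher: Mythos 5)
Your proposal is correct and follows essentially the same route as the paper: the paper's proof of this corollary simply observes that, since annular settling only constrains the state at $t=T$, the relaxed terminal requirements $\overline{V}(x(T))\leq\overline{\alpha}_{1}(\delta_2)$ and $\underline{V}(x(T))\geq\underline{\alpha}_{2}(\delta_1)$ follow from the (simplified, terminal-time-only) $\overline{\gamma}$- and $\underline{\gamma}$-conditions together with the machinery of Theorem~\ref{theorem:ASTS_main_thm}, exactly as you argue. Your write-up merely spells out the comparison/integration step and the reason the sup-over-$[0,T]$ case split and the nesting $(\epsilon_1,\epsilon_2)\subseteq(\delta_1,\delta_2)$ disappear, which matches the paper's (much terser) argument.
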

\begin{proof}
In such a case, our goal is to ensure the fulfillment of  the relaxed conditions $\overline{V}(x(T)) \leq \overline{\alpha}_{1}(\delta_2), \underline{V}(x(T)) \geq \underline{\alpha}_2(\delta_1)$, which can be directly deduced under the applied substitutions, then the conditions of this corollary imply that all necessary counterparts
from Theorem \ref{theorem:ASTS_main_thm} are verified, and the conclusion follows.
\end{proof}

Note that in this corollary, there is no restriction on relations between $(\epsilon_1,\epsilon_2)$ and $(\delta_1,\delta_2)$. Hence, these intervals may be inside one another or even not intersecting. In applications of neural networks, there is usually an output that has to approach desired levels to classify different temporal sequences. We further consider oAS conditions for system~\eqref{eq:main_system_Per} to address this problem.

\begin{thm}\label{theorem:oAS}  
Let assumptions~\ref{assu:main} and~\ref{assu:integral_bound} be satisfied and $T >0$, $\gamma_{0} \geq 0$, $(\delta_1, \delta_2)  \subset \R_{+}$, and an initial condition $x_0 \in \R^n$ with $Cx_0 \neq 0$ be given. If there exist  $0 \leq \overline{P}_1,\;  \underline{P}_1 \in \R^{n \times n}$, $0 < \overline{P}_2,\; \underline{P}_2 \in \R^{p \times p}$; $\left \{\overline{\Lambda}_j = \diag(\overline{\Lambda}_j^1,...,\overline{\Lambda}_j^{k_j}), \; \underline{\Lambda}_j = \diag(\underline{\Lambda}_j^1,...,\underline{\Lambda}_j^{k_j}\right\}_{j=1}^M \subset \mathbb{D}_{+}^{k_j}$; $\left\{\overline{\Upsilon}_{0,j},\; \underline{\Upsilon}_{0,j}\right\}_{j=1}^{M} \subset \mathbb{D}^{k_j}; \left\{\overline{\Upsilon}_{s,z},\; \underline{\Upsilon}_{s,z}\right\}_{1 \leq s < z \leq M} \subset \mathbb{D}^{n}$; $\left\{ \overline{\Omega}^j,\; \underline{\Omega}^j \right\}_{j=1}^M \subset \mathbb{R}^{n \times k_j}$; symmetric matrices $\overline{\Gamma},\; \underline{\Gamma},\;  \overline{\Psi},\; \underline{\Psi}  \in \mathbb{R}^{n \times n}$, $\left\{ \overline{\Xi}^s,\; \underline{\Xi}^s \right\}_{s=0}^M \subset \mathbb{R}^{k_s \times k_s}$; 
$\overline{\gamma}, \; \underline{\gamma}>0$ and  $\beta_1,\beta_2,\; \underline{\rho},\;  \underline{\ell} \in \R$  such that 
\begin{align}
  & \underline{P} + \underline{\rho} \sum_{j=1}^{\mu} H_j^\top \underline{\Lambda}_j H_j \leq \underline{\ell} C^\top C,   \quad \underline{P} := \underline{P}_1 + C^\top \underline{P}_2 C,  \nonumber \\
 & \underline{Q}=\underline{Q}^{\top} = \left(\underline{Q}_{a,\,b}\right)_{a,\,b=1}^{M+3} \geq 0, \nonumber \\
& 
\begin{cases} \nonumber \underline{\gamma} \leq \frac{ e^{\beta_1T}\underline{\alpha}_1(\rVert Cx_0 \rVert) - \underline{\alpha}_2(\delta_1)}{\frac{\gamma_0^2}{\beta_1} \left[ e^{\beta_1 T} -1 \right]},  & \text{if}\; \beta_1 \neq 0,  \\  \underline{\gamma} \leq \frac{\underline{\alpha}_1(\rVert Cx_0 \rVert) - \underline{\alpha}_2(\delta_1) }{T\gamma_0^2}, & \text{if}\; \beta_1 = 0, \end{cases} 
\\
&\nonumber \\
&   \begin{cases} \nonumber
 \underline{\Lambda}_{j} \leq \Lambda_{j}, \\
-\underline{\Xi}^{0}\geq \beta_1\left(\underline{P}+\sum_{j=1}^{M}H_j^\top \eta_{0,j} H_j\right),
\\ 
-\underline{\Xi}^{j}\geq \beta_1 \sum_{j'=1}^{M}\eta_{1,jj'},\; & \text{if} \; \beta_1 \geq 0, \;  
\\ 
-\underline{\Upsilon}_{0,j}\geq \beta_1 \sum_{j'=1}^{M}\eta_{2,jj'}, \\
-\underline{\Upsilon}_{j,z} \geq \beta_1\sum_{j'=1}^{M}  \eta_{3,jj'z}, \\ \\
 \underline{\Lambda}_{j} \geq \Lambda_{j}, \\
 -\beta_1\left(\underline{P}+\sum_{j=1}^{M} H_j^\top \kappa_{0,j} H_j  \right) \geq\underline{\Xi}^{0}, \\ 
-\beta_1 \sum_{j'=1}^{M}\kappa_{1,jj'} \geq \underline{\Xi}^{j},\; & \text{if} \; \beta_1<0, \; 
\\ 
-\beta_1 \sum_{j'=1}^{M}\kappa_{2,jj'} \geq \underline{\Upsilon}_{0,j},
\\ 
-\beta_1 \sum_{j'=1}^{M}  \kappa_{3,jj'z} \geq \underline{\Upsilon}_{j,z}.
\end{cases}  
\end{align}
\begin{align}
&  \overline{P} := \overline{P}_1 + C^\top \overline{P}_2 C, \nonumber \\ &
  \overline{Q}=\overline{Q}^{\top}=
  \left(\overline{Q}_{a,\,b}\right)_{a,\,b=1}^{M+3}
  \leq0, \nonumber \\ &
   \begin{cases} \overline{\gamma} \leq
      \frac{ \overline{\alpha}_1(\delta_2) -e^{\beta_{2} T}
        \overline{\alpha}_2({\|x_0\|}) }{\frac{\gamma_0^2}{\beta_2} 
        \left[ e^{\beta_2 T} -1 \right]}, & \text{if}\;
      \beta_2 \neq 0,   \\  \overline{\gamma} \leq \frac{ \overline{\alpha}_1(\delta_2)
        -\overline{\alpha}_2({\|x_0\|}) }{T\gamma_0^2}, & \text{if}\; 
      \beta_2 = 0, \end{cases} 
\\ 
& \begin{cases} \nonumber
 \overline{\Lambda}_{j} \leq \Lambda_{j}, \\
  \overline{\Xi}^{0}\geq
  -\beta_2\left(\overline{P}+\sum_{j=1}^{M} H_j^\top \eta_{0,j} H_j \right),
  \\ \overline{\Xi}^{j}\geq
  -\beta_2 \sum_{j'=1}^{M}\eta_{1,jj'},\;
  & \text{if} \; \beta_2<0, \\ \overline{\Upsilon}_{0,j}\geq
  -\beta_2 \sum_{j'=1}^{M}\eta_{2,jj'},
  \\ \overline{\Upsilon}_{j,z}  \geq
  -\beta_2  \sum_{j'=1}^{M}  \eta_{3,jj'z}, \\  \\
   \overline{\Lambda}_{j} \geq \Lambda_{j}, \\
    \beta_2\left(\overline{P}+\sum_{j=1}^{M}  H_j^\top \kappa_{0,j} H_j \right)
  \geq -\overline{\Xi}^{0},
  \\ \beta_2 \sum_{j'=1}^{M}\kappa_{1,jj'} \geq
  -\overline{\Xi}^{j},\ & \text{if} \, \beta_2
  \geq 0,
  \\ \beta_2 \sum_{j'=1}^{M}\kappa_{2,jj'} \geq
  -\overline{\Upsilon}_{0,j},
  \\ \beta_2   \sum_{j'=1}^{M}  \kappa_{3,jj'z}  \geq
  - \overline{\Upsilon}_{j,z}. 
\end{cases} 
\end{align}
where
\begin{gather*}
\underline{\alpha}_1(s)={\lambda_{\min} (\underline{P}_2)s^2}, \; \underline{\alpha}_2(s)=\alpha_2^{\underline{P},\underline{\Lambda}_1,\dots,\underline{\Lambda}_M}(s), \\ 
\overline{\alpha}_1(s)={\lambda_{\min}(\overline{P}_2)s^2}, \; 
\overline{\alpha}_2(s)=\alpha_2^{\overline{P},\overline{\Lambda}_1,\dots,\overline{\Lambda}_M}(s). \\ \\
\overline{Q}_{1,1}=-\overline{\Psi}^\top - \overline{\Psi}; \quad \overline{Q}_{1,2}= \overline{\Psi}^\top A_0+\overline{P}  -\overline{\Gamma}; \\
\overline{Q}_{1,j+2}=  \overline{\Omega}_j + H_j^\top \overline{\Lambda}_j + \overline{\Psi}^\top A_j; \quad \overline{Q}_{1,M+3} = \overline{\Psi}^\top; \\
\quad \overline{Q}_{2,2} = \overline{\Gamma}^\top A_0 + A_0^\top \overline{\Gamma} +  \overline{\Xi}_0; \quad \overline{Q}_{2,j+2}= \overline{\Gamma}^\top A_j +H_j^\top \overline{\Upsilon}_{0,j} -A_0^\top \overline{\Omega}_j, \\
 \overline{Q}_{2,M+3}= \overline{\Gamma}^\top; 
\quad \overline{Q}_{j+2,j+2}= -\overline{\Omega}_j^\top A_j-A_j^\top\overline{\Omega}_j + \overline{\Xi}^j, \\
\overline{Q}_{s+2,z'+2}=-\overline{\Omega}_s^\top A_{z'} -A_s^\top\overline{\Omega}_{z'} + H_s \overline{\Upsilon}_{s,z'} H_{z'}^\top, \\
\overline{Q}_{j+2,M+3} = -\overline{\Omega}_j^\top; \quad \overline{Q}_{M+3,M+3} = -\overline{\gamma} I_n. \\ \\ 
\underline{Q}_{1,1}=-\underline{\Psi}^\top - \underline{\Psi}; \quad \underline{Q}_{1,2}= \underline{\Psi}^\top A_0+ \underline{P}  - \underline{\Gamma}; \\
\underline{Q}_{1,j+2}=  \underline{\Omega}_j + H_j^\top \underline{\Lambda}_j^\top + \underline{\Psi}^\top A_j; \quad \underline{Q}_{1,M+3} = \underline{\Psi}^\top; \\
\quad \underline{Q}_{2,2} = \underline{\Gamma}^\top A_0 + A_0^\top \underline{\Gamma}  +  \underline{\Xi}_0; \quad \underline{Q}_{2,j+2}= \underline{\Gamma}^\top A_j +H_j^\top \underline{\Upsilon}_{0,j} -A_0^\top \underline{\Omega}_j; \\
\quad \underline{Q}_{2,M+3}= \underline{\Gamma}^\top; 
\quad \underline{Q}_{j+2,j+2}= -\underline{\Omega}_j^\top A_j -A_j^\top\underline{\Omega}_j +\underline{\Xi}^j, \\
\underline{Q}_{s+2,z'+2}=-\underline{\Omega}_s^\top A_{z'} -A_s^\top\underline{\Omega}_{z'} + H_s \underline{\Upsilon}_{s,z'} H_{z'}^\top, \\
\underline{Q}_{j+2,M+3} = -\underline{\Omega}_j^\top; \quad \underline{Q}_{M+3,M+3} = \underline{\gamma} I_n. 
\end{gather*}
\begin{gather*}
j,j'\in\overline{1,M};\; z\in\overline{j+1,M};  \; s\in\overline{1,M-1};\;z'\in\overline{s+1,M}, 
\end{gather*}  
then  system~\eqref{eq:main_system_Per} is oAS with respect to $(T,x_0,\gamma_0,\delta_1,\delta_2)$.  
\end{thm}

\begin{proof}
This proof follows the arguments of Theorem~\ref{theorem:ASTS_main_thm} and Corollary~\ref{cor_AS}, except that the Lyapunov functions 
\begin{gather*}
\overline{V}(x)=x^{\top} \left( \overline{P}_1 + C^\top \overline{P}_2 C \right) x + 2\sum_{j=1}^{M}\sum_{i=1}^{k_j} \overline{\Lambda}_{j}^{i}\int_{0}^{H_j^{(i)} x}f_{j}^{i}(\tau)d\tau, \\ 
\underline{V}(x)=x^{\top}  \left( \underline{P}_1 + C^\top \underline{P}_2 C \right) x +2\sum_{j=1}^{M}\sum_{i=1}^{k_j} \underline{\Lambda}_{j}^{i}\int_{0}^{H_j^{(i)} x}f_{j}^{i}(\tau)d\tau
\end{gather*}
and the conditions 
\begin{gather*}
\overline{\alpha}_{1}(\rVert Cx \rVert) \leq \overline{V}(x)\leq \overline{\alpha}_{2}(\rVert x \rVert),  \\
\underline{\alpha}_{1}(\rVert {C} x \rVert) \leq \underline{V}(x)\leq \underline{\alpha}_{2}(\rVert Cx \rVert),\; \forall x \in \R^n
\end{gather*}
are considered. The descriptor method is also utilized. 
\end{proof}


Note also that the conditions in Theorem~\ref{theorem:oAS} are much milder than the ones in Theorem~\ref{theorem:ASTS_main_thm}, which allows the corresponding LMIs in Theorem~\ref{theorem:oAS} to be solved more easily. The condition $Cx_0\ne0$ can be avoided by allowing $\underline\gamma$ to change its sign.

\section{Application to Recurrent Neural Networks} \label{sec:example}
In this section, we  illustrate the usefulness of the proposed conditions for continuous-time recurrent neural networks (CTRNNs). 

The considered CTRNN is \cite{Benoit-Marand2006}:
\begin{gather}
\begin{aligned}
\dot{\chi}(t) & =  A \chi(t) + W_0 g(W_1 \chi(t)) + u(t), \quad  t\in \R_{+}, \\
y(t) & = \tilde{C} \chi(t),  \label{main_continuous_time_RNN_dynamics}
\end{aligned}
\end{gather} 
where $\chi(t) \in \mathbb{R}^n$  is the state vector, $\chi(0)=\chi_0$; $A\in \R^{n\times n}, W_0 \in \R^{n \times N}, W_1 \in \R^{N \times n}$ are the weight matrices; $g: \R^N \to \R^N$ is the activation function; $u\in\mathscr{L}_{\infty}^{n}$ is the input; $y(t)  \in \mathbb{R}^p$ is the output and $\tilde{C} \in \R^{p \times n}$, $N\geq n$ is the number of neurons in the hidden layer. It is clear that the CTRNN~\eqref{main_continuous_time_RNN_dynamics} is in the form of~\eqref{eq:main_system_Per}, and the Hopfield neural network \cite{Hopfield1984} is a special example of \eqref{main_continuous_time_RNN_dynamics}.

Here we consider typical cases of non-polynomial activation functions
\cite{Apicella2019}, \emph{e.g.}, rectified linear unit,
sigmoid, and hyperbolic tangent functions, satisfying the sector
condition of Assumption~\ref{assu:main}.


During the training process, it is of primary importance to analyze AS and oAS of a trained CTRNN~\eqref{main_continuous_time_RNN_dynamics} at each step. The following corollary is for studying AS of~\eqref{main_continuous_time_RNN_dynamics}.  

\begin{cor} 
If the conditions of Corollary~\ref{cor_AS} are satisfied under the substitutions $A_0 \to A$, $M \to 1$, $A_1 \to W_0$, $F_1 \to g$, $H_1 \to W_1$, then the CTRNN~\eqref{main_continuous_time_RNN_dynamics} is AS with respect to $(T,\epsilon_1,\epsilon_2,\gamma_0,\delta_1,\delta_2)$.  
\end{cor}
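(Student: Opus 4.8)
The plan is to recognize the CTRNN~\eqref{main_continuous_time_RNN_dynamics} as the special case $M=1$ of the generalized Persidskii system~\eqref{eq:main_system_Per} and then to apply Corollary~\ref{cor_AS} directly. First I would make the identification explicit: with $A_0=A$, $A_1=W_0$, $H_1=W_1$, $F_1=g$, $C=\tilde{C}$ and $k_1=N$, equation~\eqref{eq:main_system_Per} becomes exactly~\eqref{main_continuous_time_RNN_dynamics}, so it remains to check that the standing hypotheses of Corollary~\ref{cor_AS} survive this substitution. Assumption~\ref{assu:main} holds because each scalar activation $g^{i}$ under consideration (rectified linear unit, sigmoid, hyperbolic tangent, after the usual centering) satisfies $\nu g^{i}(\nu)>0$ for all $\nu\neq 0$. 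Assumption~\ref{assu:integral_bound} holds because, for $M=1$, the double sums $\sum_{j'}\sum_{z}$ in that assumption are empty and the two-sided inequality reduces to bounding $2\,{\bf{1}}_{N}^{\top}\Lambda_1\,[\int_{0}^{W_1^{(i)}\mathrm{x}}g^{i}(s)\,ds]_{i=1}^{N}$ from below and above: for the bounded activations one may take all $\kappa$-coefficients equal to zero (the primitives are nonnegative) together with suitable $\eta_{0,1}$ and $\eta_{2,11}\neq 0$, exactly as indicated after Assumption~\ref{assu:integral_bound}, while for the rectified linear unit one additionally uses $\eta_{0,1}\neq 0$ to dominate the linear growth of the primitive.

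The second step is purely bookkeeping on the $M=1$ specialization: the variables $\overline{\Upsilon}_{s,z},\underline{\Upsilon}_{s,z}$ and $\overline{\Xi}^{s},\underline{\Xi}^{s}$ with $s\geq 2$ do not occur, the block matrices $\overline{Q},\underline{Q}$ of Corollary~\ref{cor_AS} collapse to their $4\times 4$ block form (since $M+3=4$), and every occurrence of $A_0,A_1,H_1,\Lambda_1$ is read through the dictionary above. One then checks that the resulting positivity conditions, LMIs, and scalar bounds on $\underline{\gamma}$ and $\overline{\gamma}$ are literally the hypotheses stated in the present corollary. Thus the assumed feasibility is precisely the feasibility that Corollary~\ref{cor_AS} requires for the system~\eqref{main_continuous_time_RNN_dynamics}.

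The final step is to invoke Corollary~\ref{cor_AS}, which then yields directly that~\eqref{main_continuous_time_RNN_dynamics} is AS with respect to $(T,\epsilon_1,\epsilon_2,\gamma_0,\delta_1,\delta_2)$, as claimed. There is no genuine obstacle in the argument; the only point deserving real attention --- and the place where I would concentrate the details --- is the verification of Assumption~\ref{assu:integral_bound} for the concrete activation functions, i.e.\ exhibiting admissible nonnegative $\kappa$- and $\eta$-coefficients, so that the corollary is not vacuous; the remainder is a transcription of Corollary~\ref{cor_AS} to the $M=1$ case.
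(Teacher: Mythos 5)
Your proposal is correct and follows essentially the same route as the paper, which simply observes that under the stated substitutions the CTRNN~\eqref{main_continuous_time_RNN_dynamics} is the $M=1$ instance of~\eqref{eq:main_system_Per}, so Corollary~\ref{cor_AS} applies verbatim. The only remark is that your emphasis on verifying Assumption~\ref{assu:integral_bound} for concrete activation functions is not part of this corollary's proof --- its hypothesis already assumes the conditions of Corollary~\ref{cor_AS} (including the standing assumptions) hold, and such verification belongs to the application examples rather than to the statement itself.
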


\begin{proof}
This result is a direct consequence, considering the expressions of~\eqref{eq:main_system_Per}, \eqref{main_continuous_time_RNN_dynamics} and  Corollary~\ref{cor_AS}.
\end{proof}

We then propose the following main theorem of oAS for the CTRNN~\eqref{main_continuous_time_RNN_dynamics}. 

\begin{thm} \label{thm_CTRNN_oAS}
If the conditions of Theorem~\ref{theorem:oAS}  are satisfied under the substitutions $A_0 \to A$, $M \to 1$, $A_1 \to W_0$, $F_1 \to g$, $H_1 \to W_1$, $C \to \tilde{C}$, then the CTRNN~\eqref{main_continuous_time_RNN_dynamics} is oAS with respect to $(T,x_0,\gamma_0,\delta_1,\delta_2)$.  
\end{thm}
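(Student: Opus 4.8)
The plan is to recognize the CTRNN~\eqref{main_continuous_time_RNN_dynamics} as a particular instance of the generalized Persidskii system~\eqref{eq:main_system_Per} and then invoke Theorem~\ref{theorem:oAS} directly. First I would make the structural identification explicit: take $x := \chi$, keep the dimension $n$, and set $M := 1$, $k_1 := N$, $A_0 := A$, $A_1 := W_0$, $H_1 := W_1$, $F_1 := g$, $C := \tilde{C}$, with the same input $u \in \L_\infty^n$ and the same interval $[0,T]$. Under these assignments the drift $A_0 x + \sum_{j=1}^{M} A_j F_j(H_j x) + u$ of~\eqref{eq:main_system_Per} becomes exactly $A\chi + W_0 g(W_1 \chi) + u$, and the output $Cx$ becomes $\tilde{C}\chi$, so~\eqref{main_continuous_time_RNN_dynamics} is~\eqref{eq:main_system_Per} with a single nonlinear block $F_1 = g$.

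Second, I would check that the standing hypotheses of Theorem~\ref{theorem:oAS}, namely Assumptions~\ref{assu:main} and~\ref{assu:integral_bound}, transfer to this instance. Assumption~\ref{assu:main} is the componentwise sector condition $\nu\, g^i(\nu) > 0$ for $\nu \neq 0$, which holds for the activation functions considered here (sigmoid, hyperbolic tangent, rectified linear unit), as noted in the text. Assumption~\ref{assu:integral_bound} requires quadratic upper and lower bounds, in $\chi$ and $g(W_1\chi)$, on the integrals $\int_{0}^{W_1^{(i)}\chi} g^i(s)\,ds$; for bounded monotone activations these integrals are themselves bounded, while for the rectified linear unit one has the exact identity $\int_{0}^{\nu} g^i(s)\,ds = \frac{1}{2}\, g^i(\nu)\,\nu$, so suitable nonnegative coefficients --- in particular via the $\kappa_{2,11}$ and $\eta_{2,11}$ terms --- can always be selected, as indicated in the discussion following Assumption~\ref{assu:integral_bound}.

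Third, with the hypotheses in place the conclusion is immediate: evaluating the LMIs and scalar inequalities of Theorem~\ref{theorem:oAS} with $M=1$ and the substituted data collapses the block matrices $\overline{Q}, \underline{Q}$ from size $M+3$ to size $4$ and reproduces precisely the conditions assumed in the present statement, together with the output-sensitive Lyapunov functions $\overline{V}(x) = x^\top(\overline{P}_1 + \tilde{C}^\top \overline{P}_2 \tilde{C})x + 2\sum_{i=1}^{N}\overline{\Lambda}_1^i\int_0^{W_1^{(i)}x} g^i(\tau)\,d\tau$ and its lower counterpart $\underline{V}$ used in that proof; hence~\eqref{main_continuous_time_RNN_dynamics} is oAS with respect to $(T,\epsilon_1,\epsilon_2,\gamma_0,\delta_1,\delta_2)$. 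There is no genuine analytic obstacle --- the result is a specialization of Theorem~\ref{theorem:oAS} --- and the only step deserving explicit care is verifying Assumption~\ref{assu:integral_bound} for the particular trained activation in use, together with the routine dimension bookkeeping ($\overline{P}_1,\underline{P}_1\in\R^{n\times n}$, $\overline{P}_2,\underline{P}_2\in\R^{p\times p}$, $\overline{\Lambda}_1,\underline{\Lambda}_1\in\mathbb{D}_+^{N}$) that makes the specialized inequalities well posed.
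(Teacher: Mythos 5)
Your proposal is correct and matches the paper's own argument: the paper likewise proves Theorem~\ref{thm_CTRNN_oAS} by observing that the CTRNN~\eqref{main_continuous_time_RNN_dynamics} is the generalized Persidskii system~\eqref{eq:main_system_Per} with $M=1$, $A_0=A$, $A_1=W_0$, $H_1=W_1$, $F_1=g$, $C=\tilde{C}$, and then invoking Theorem~\ref{theorem:oAS} directly. Your extra remarks on verifying Assumptions~\ref{assu:main} and~\ref{assu:integral_bound} for specific activations are harmless context (those hypotheses are already part of the theorem's assumption), so nothing essential differs.
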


\begin{proof}
This proof is again straightforward under the consideration of the forms of~\eqref{eq:main_system_Per}, \eqref{main_continuous_time_RNN_dynamics} and Theorem~\ref{theorem:oAS}.
\end{proof}

\begin{rem}
Note that if there exist non-zero biases in the activation function, then it is possible to rewrite the system~\eqref{main_continuous_time_RNN_dynamics} as follows: 
\begin{gather*}
\begin{aligned}
\dot{\xi}(t) &=  \tilde{A} \xi(t) + \tilde{W}_0 \tilde{g} \left( \begin{bmatrix} W_1 & b_0 \end{bmatrix} \xi(t) \right) + \tilde{u}(t), \\
y(t) &=  \hat{C} \xi(t)
\end{aligned}
\end{gather*}
which is again in the form of \eqref{eq:main_system_Per}, where 
\begin{gather*}
\xi(t) = \begin{bmatrix} \chi(t) \\ \eta \end{bmatrix} \in \mathbb{R}^{n+N}, \; \tilde{A} = \diag(A,O_{N \times N}),   \\
\tilde{g} \left( \tau \right) = \begin{bmatrix} g(\tau) \\g(\tau) \end{bmatrix},\; \hat{C} = \begin{bmatrix} \tilde{C} & O_{ p \times N} \end{bmatrix}, \\
\tilde{W}_0 = \diag(W_0,O_{N \times N}), \;\tilde{u}(t) = \begin{bmatrix} u(t) \\ O_{N \times 1} \end{bmatrix}. 
\end{gather*}
Then a similar analysis can be performed under the guarantee that $\chi(T)$ or $y(T)$ will be in $\cl(B_{n+N}(\delta_2) \setminus B_{n+N}(\delta_1))$ or $\cl(B_{p}(\delta_2) \setminus B_{p}(\delta_1))$, respectively.
\end{rem}

\begin{rem}
The proposed conditions can also be applied to numerous variations of the CTRNNs, \emph{e.g.}, a variable activation function (VAF) sub-network scheme was introduced in \cite{Apicella2019}, which can be embedded into a full connected CTRNN, resulting in a CTRNN taking the form of~\eqref{eq:main_system_Per} with a sufficiently large $M$.
\end{rem}


\begin{figure}[htb]
\centering \includegraphics[width=0.7\textwidth,height=7cm]{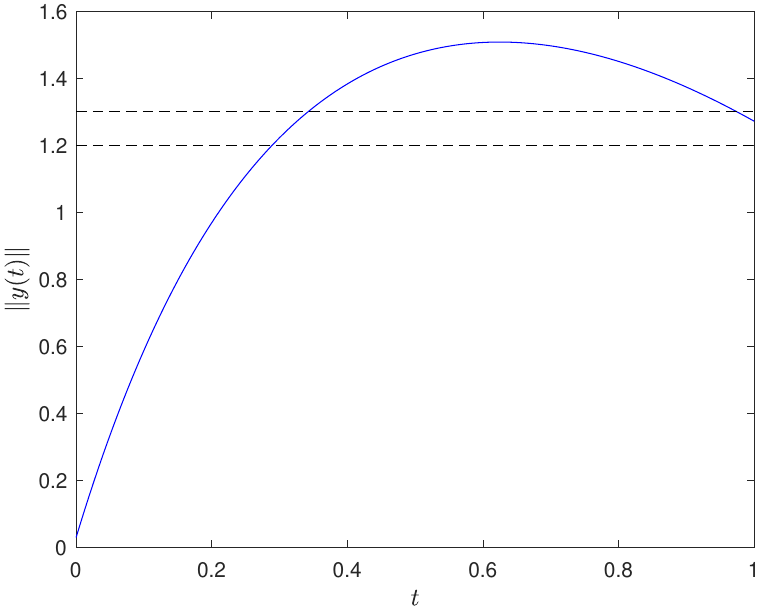}\caption{The trajectory of $\rVert y(t) \rVert$ \label{fig:y}}
\end{figure}

\subsection{Numerical example}

In this subsection, we consider Hopfield neural networks
\cite{Hopfield1984} with one hidden layer, which has $n$ recurrent
nodes of bipolar sigmoids and $N$ neurons in the hidden layer. 

\begin{exmp}
Let $n=2$, $N=50$, and the activation function $g^{\ell} =
\tanh$ for all $\ell \in \overline{1,N}$. The weight matrix $A$ and
$\tilde{C}$ are
\begin{gather*}
A = \begin{bmatrix} -2 & 0  \\0 & -5  \end{bmatrix}, \;
\tilde{C} = \begin{bmatrix} -2.2 & 1.3  \end{bmatrix}.
\end{gather*}
We also obtained the values of $W_0, W_1$ in the Hopfield neural network randomly and analyzed the oAS property. 
Assumption~\ref{assu:main} is fulfilled due to the form of $g$. Also, Assumption~\ref{assu:integral_bound} is satisfied for 
\begin{gather*}
\kappa_{0,1}^{1}=...=\kappa_{0,1}^{50}=0, \;  \kappa_{1,11}^{1}=...=\kappa_{1,11}^{50}=0,\\
\kappa_{2,11}^{1}=...=\kappa_{2,11}^{50}=0; \\ 
\eta_{0,1} = \Lambda_1, \;  \eta_{1,11}^{1}=...=\eta_{1,11}^{50}=0,\\
\eta_{2,11}^{1}=...=\eta_{2,11}^{50}=0 
\end{gather*}
since 
\begin{gather*}
2 {\bf{1}}_{N}^\top \Lambda_1 \begin{bmatrix}  \int_{0}^{W_1^{(1)} \chi}g^{1}(\tau)d\tau \\ \vdots \\ \int_{0}^{W_1^{(N)} \chi }g^{N}(\tau)d\tau \end{bmatrix} = 2 {\bf{1}}_{N}^\top \Lambda_1 \begin{bmatrix}  \ln(\cosh(W_1^{(1)} \chi)) \\ \vdots \\ \ln(\cosh(W_1^{(N)} \chi))  \end{bmatrix}  \\
\leq  {\bf{1}}_{N}^\top \Lambda_1\begin{bmatrix}  (W_1^{(1)} \chi)^2 \\ \vdots \\ (W_1^{(N)} \chi)^2   \end{bmatrix} \leq  \chi^\top W_1^\top \eta_{0,1} W_1 \chi.
\end{gather*}
The oAS conditions in Theorem~\ref{thm_CTRNN_oAS} with $\; \delta_1 =1.2, \; \delta_2=1.3, \; u(t) = \begin{bmatrix}  3\cos(t) \\ \sin(t) \end{bmatrix}, \; T=1$, and $\gamma_0 = 16$ 
are verified. The trajectory of the norm of the output with a nonzero initial state on $t \in [0,1]$  is shown in Fig.~\ref{fig:y}.    From this plot, we see that  the  given  estimates  are  rather
tight, considering the trajectory closely approaches the border at $t=1$, which means oAS of the considered system.  

Moreover, we illustrated the AS property of the considered CTRNN. We omitted the output $y$ and let the other system parameters  remain unchanged for brevity. The AS conditions are verified by 
\begin{gather*}
 \epsilon_1 = 0.09, \; \epsilon_2 =0.1, \; \delta_1 =0.9, \; \delta_2=1, \; u(t) = \begin{bmatrix}  3\cos(t) \\ \sin(t) \end{bmatrix}, \; T=1, \gamma_0 = 16, \\
\beta_1 =100, \; \Lambda = \underline{\Lambda}_1= 0, \; \underline{P} >0,\; -\underline{\Xi}^0 \geq \beta_1 \underline{P}, \; \underline{\Xi}^1 = O_{50 \times 50}, \; -\underline{\Upsilon}_{0,1}  \geq 0, \\
 \underline{\gamma} \leq \frac{ e^{100}\underline{\alpha}_1(0.09) - \underline{\alpha}_2(0.9)}{2.56 \times \left[ e^{100} -1 \right]}, \\
 {\small \underline{Q} = \underline{Q}^\top = \begin{bmatrix}
     -\underline{\Psi}^\top - \underline{\Psi} & &  \underline{\Psi}^\top A + \underline{P}  - \underline{\Gamma}  & &   \underline{\Omega}_1  + \underline{\Psi}^\top W_0  & &   \underline{\Psi}^\top \\
    * & & \underline{\Gamma}^\top A + A^\top \underline{\Gamma}  +  \underline{\Xi}_0 & & \underline{\Gamma}^\top W_0 + W_1^\top \underline{\Upsilon}_{0,1} -A^\top \underline{\Omega}_1 & & \underline{\Gamma}^\top \\ 
    * & & * & & - \underline{\Omega}_1^\top W_0 - W_0^\top\underline{\Omega}_1 & &  -\underline{\Omega}_1^\top \\
    * & & * & & * & & \underline{\gamma} I_n
 \end{bmatrix} \geq 0,   } \\
 \beta_2 = 0.01, \; \Lambda = \overline{\Lambda}_1= 0, \;  \overline{P} >0, \;\overline{\Xi}^0 = O_{2 \times 2}, \; \overline{\Xi}^1 = O_{50\times 50}, \; \overline{\Upsilon}_{0,1}  \geq 0, \\    
  \overline{\gamma} \leq \frac{\overline{\alpha}_1(1) -  e^{0.01}\overline{\alpha}_2(0.1)}{25600  \times \left[ e^{0.01} -1 \right]}, \\ 
  {\small \overline{Q} = \overline{Q}^\top = \begin{bmatrix}
     -\overline{\Psi}^\top - \overline{\Psi} & &  \overline{\Psi}^\top A + \overline{P}  - \overline{\Gamma}  & &   \overline{\Omega}_1  + \overline{\Psi}^\top W_0  & &   \overline{\Psi}^\top \\
    * & & \overline{\Gamma}^\top A + A^\top \overline{\Gamma}  & & \overline{\Gamma}^\top W_0 + W_1^\top \overline{\Upsilon}_{0,1} -A^\top \overline{\Omega}_1 & & \overline{\Gamma}^\top \\ 
    * & & * & & - \overline{\Omega}_1^\top W_0 - W_0^\top\overline{\Omega}_1 & &  -\overline{\Omega}_1^\top \\
    * & & * & & * & & -\overline{\gamma} I_n
 \end{bmatrix} \leq 0.   } 
\end{gather*}
Also, the trajectory of $\rVert x(t) \rVert$ with a nonzero initial state over the time interval $[0,1]$ is presented in Fig. \ref{fig:x}, from which one can easily see that the ASTS is not satisfied since $0.9 \leq \rVert x(t) \rVert \leq 1$ does not hold for all $t \in [0,1]$, while the AS property was verified (only the time instants $t=0$ and $t=1$ are taken into account).  This demonstrates the main difference between ASTS and AS, and the usefulness of the latter notion (in the presence of output, oAS can be applied, for instance, in this example).

\begin{figure}[htb]
\centering \includegraphics[width=0.7\textwidth,height=7cm]{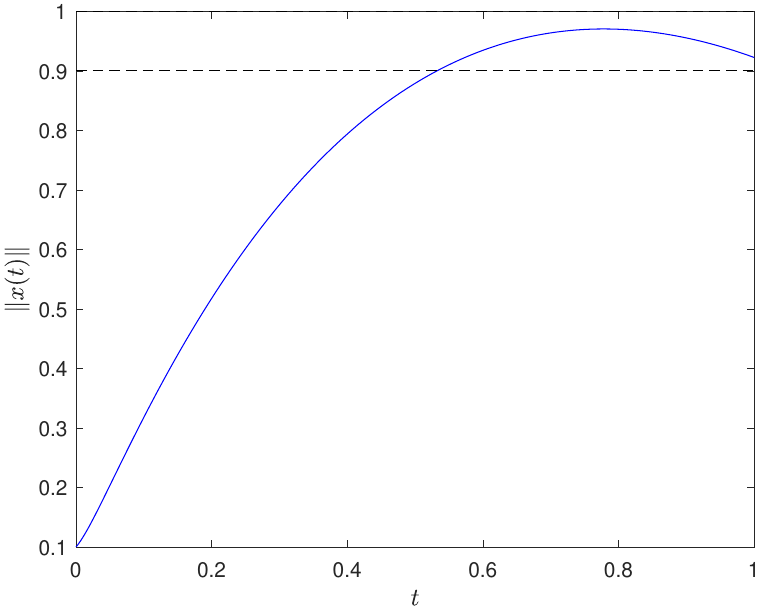}\caption{The trajectory of $\rVert x(t) \rVert$ \label{fig:x}}
\end{figure}

\end{exmp}

\begin{exmp}
In order to further illustrate the validity of the proposed conditions for different neural networks in practice, we then deal with a larger Hopfield neural network with $n=10$ recurrent nodes. Under the settings of $N=50$ and $g^{\ell} =
\tanh$ for $\ell \in \overline{1,N}$, we chose the tuple of matrices $(A,W_0,W_1,\tilde{C})$ in~\eqref{main_continuous_time_RNN_dynamics}
arbitrarily, then the LMIs of
Theorem~\ref{thm_CTRNN_oAS}  are again verified for $\; \epsilon_1 = 0.05, \; \epsilon_2 = 0.1, \; \delta_1 = 2, \; \delta_2 =2.3,  \; T=1$, and $\gamma_0 = 16$. It is worth mentioning that the verification of the corresponding LMIs is feasible due to the high flexibility of the tuning parameters.
\end{exmp}

\section{Conclusion}
In this paper, notions of annular settling (AS) and output annular
settling (oAS) were proposed, and sufficient conditions for annular
short-time stability (ASTS), short-time boundedness with nonzero
initial state (STBNZ), AS, and oAS in generalized Persidskii systems
were given. The formulated conditions were obtained in the form of
linear algebraic and matrix inequalities. Therefore, they can be
constructively verified. An application to a continuous-time recurrent
neural network was presented to validate the proposed results. 


\section*{Appendix. Proof of Theorem~1}
\begin{proof}
Consider a candidate Lyapunov function
\begin{gather*}
\overline{V}(x)=x^{\top} \overline{P} x+2\sum_{j=1}^{M}\sum_{i=1}^{k_j} \overline{\Lambda}_{j}^{i}\int_{0}^{H_j^{(i)} x}f_{j}^{i}(\tau)d\tau.  
\end{gather*}

Under condition~\eqref{overline_V_estimate}, there exist some
functions $\overline{\alpha}_{1}$, $\overline{\alpha}_{2}$ from class
$\K_{\infty}$ such that
\begin{equation*}
\overline{\alpha}_{1}(\rVert x \rVert) \leq \overline{V}(x)\leq \overline{\alpha}_{2}(\rVert x \rVert),\; \forall x \in \R^n
\end{equation*}
due to Finsler's Lemma and Assumption~\ref{assu:main}. Then consider the time derivative of $\overline{V}$ along trajectories of~\eqref{eq:main_system_Per}, for which the following upper estimate holds true under the restriction~\eqref{overline_Q_restriction}:
\begin{eqnarray*}
\dot{\overline{V}} & =& \nabla \overline{V}(x) \dot{x} \\
& = & \dot{x}^{\top} \overline{P} x+x^{\top} \overline{P} \dot{x}+2\sum_{j=1}^{M} \dot{x}^\top  H_j^\top \overline{\Lambda}_{j}F_{j}(H_jx) \\ 
& &+2 \left( \sum_{j=1}^{M} F_j(H_jx)^\top \overline{\Omega}_j^\top -\dot{x}^\top \overline{\Psi}^\top-x^\top \overline{\Gamma}^\top \right)\Bigg(\dot{x} - A_{0}x  \\
& &-\sum_{j=1}^{M}A_{j}F_{j}(H_j x)- u \Bigg) \\
& = & \left[\begin{array}{c}
\dot{x} \\
x\\
F_{1}(H_1 x)\\
\vdots\\
F_{M}(H_M x)\\
u
\end{array}\right]^{\top}\overline{Q}\left[\begin{array}{c}
\dot{x} \\
x\\
F_{1}(H_1 x)\\
\vdots\\
F_{M}(H_M x)\\
u
\end{array}\right]-x^{\top}\overline{\Xi}^{0}x\\
 &  & -\sum_{j=1}^{M}F_{j}(H_jx)^{\top} \overline{\Xi}^{j}F_{j}(H_j x)-2\sum_{j=1}^{M}x^{\top} H_j^\top\overline{\Upsilon}_{0,j}F_{j}(H_j x)\\
 &  & -2\sum_{s=1}^{M-1}\sum_{z=s+1}^{M}F_{s}(H_s x)^{\top} H_s\overline{\Upsilon}_{s,z} H_z^\top F_{z}(H_z x)\\ 
 & &+ \overline{\gamma} u^{\top} u \\
 & \leq & -x^{\top}\overline{\Xi}^{0}x -2\sum_{s=1}^{M-1}\sum_{z=s+1}^{M}F_{s}(H_s x)^{\top}H_s \overline{\Upsilon}_{s,z} H_z^\top F_{z}(H_z x) \\
 &  & -\sum_{j=1}^{M}F_{j}(H_jx)^{\top} \overline{\Xi}^{j}F_{j}(H_j x) -2\sum_{j=1}^{M}x^{\top} H_j^\top \overline{\Upsilon}_{0,j}F_{j}(H_j x) \\ 
 & & +\overline{\gamma} u^{\top} u.
\end{eqnarray*} 
Here the descriptor method \cite{Fridman2014book} was applied on the second step. 

Now we have to show that there exists $\beta_2 \in \mathbb{R}$ such that  
\begin{gather*}
\beta_2 \overline{V}(x) \geq - x^{\top}\overline{\Xi}^{0}x-\sum_{j=1}^{M}F_{j}(H_jx)^{\top} \overline{\Xi}^{j}F_{j}(H_j x)\nonumber \\
-2\sum_{j=1}^{M}x^{\top} H_j^\top\overline{\Upsilon}_{0,j}F_{j}(H_j x)-2\sum_{s=1}^{M-1}\sum_{z=s+1}^{M}F_{s}(H_s x)^{\top}H_s \overline{\Upsilon}_{s,z} H_z^\top F_{z}(H_z x),
\end{gather*}
which is true under Assumption~\ref{assu:integral_bound} and the conditions~\eqref{overline_V_estimate}, \eqref{eq:beta_two_condi}. Therefore, we have
\begin{gather*}
\dot{\overline{V}} \leq \beta_{2} \overline{V}+ \overline{\gamma} u^\top u,
\end{gather*}
so that
\begin{eqnarray*}
\overline{V}(x(t)) 
& \leq &
e^{\beta_{2} t} \overline{V}(x_0) + \int_{0}^t e^{\beta_{2} (t-s)} \overline{\gamma} u(s)^\top u(s) ds\\
& \leq &
e^{\beta_{2} t} \overline{\alpha}_2(\rVert x_0 \rVert) + \frac{\overline{\gamma} \gamma_0^2}{\beta_2} \left( e^{\beta_2 t} -1 \right)\\ 
& \leq &
e^{\beta_{2} t} \overline{\alpha}_2(\epsilon_2) + \frac{\overline{\gamma} \gamma_0^2}{\beta_2} \left( e^{\beta_2 t} -1 \right).
\end{eqnarray*}
It further holds that

\begin{eqnarray*} 
\sup_{t\in[0,T]}\overline{V}(x(t)) &\leq & \sup_{t\in [0,T]} \Big\{ a
\eta(t) + b[\left( \eta(t)-1 \right) \Big\} \\ & = &\begin{cases} a \eta(T) + b
  \left( \eta(T)-1 \right), & \text{if}\; \beta_2 > 0,\\ a +
  \overline{\gamma} \gamma_0^2 T, & \text{if}\; \beta_2 = 0,\\ a
  \eta(0) + b \left( \eta(0)-1 \right), & \text{if}\; \beta_2 <0, a+b \geq 0,
\\ a \eta(T) + b \left( \eta(T)-1\right), & \text{if}\; \beta_2 <0, a+b < 0
\end{cases} 
\end{eqnarray*}

with the settings of 
\begin{gather*}
\eta(t)= e^{\beta_2 t},\; b = \frac{\overline{\gamma} \gamma_0^2}{\beta_2}, \; a = \overline{\alpha}_2(\epsilon_2).
\end{gather*}
Therefore, under conditions~\eqref{gamma_overline_condi}, we obtain
\begin{eqnarray*}
\overline{V}(x(t))\leq  e^{\beta_{2} t} \overline{\alpha}_2(\epsilon_2) + \frac{\overline{\gamma} \gamma_0^2}{\beta_2} \left[ e^{\beta_2 t} -1 \right] 
\leq  \overline{\alpha}_{1}(\delta_2), \; \forall t \in [0,T],
\end{eqnarray*}
by which we see that 
\begin{gather*}
\overline{\alpha}_1(\rVert x(t) \rVert) \leq \overline{V}(x(t)) \leq \overline{\alpha}_1(\delta_2)\quad \Rightarrow \quad  \rVert x(t) \rVert \leq \delta_2, \; \forall t \in [0,T]. 
\end{gather*}

Now let us evaluate the infimum of $\rVert x(t) \rVert$ on  $t\in [0,T]$, for which another candidate Lyapunov function
\begin{gather*}
\underline{V}(x)=x^{\top} \underline{P} x+2\sum_{j=1}^{M}\sum_{i=1}^{k_j} \underline{\Lambda}_{j}^{i}\int_{0}^{H_j^{(i)} x}f_{j}^{i}(\tau)d\tau   
\end{gather*}
is considered. In a similar way as in the previous development, it
follows that there exist some functions $\underline{\alpha}_{1}$, $
\underline{\alpha}_{2} \in \K_{\infty}$ such that
\begin{eqnarray*}
 \underline{\alpha}_{1}(\rVert x \rVert) \leq \underline{V}(x)\leq \underline{\alpha}_{2}(\rVert x \rVert),\; \forall x \in \R^n 
\end{eqnarray*}
due to the condition~\eqref{underline_V_estimate}, Finsler's Lemma and Assumption~\ref{assu:main}. Calculating the time derivative of $\underline{V}$ on trajectories of~\eqref{eq:main_system_Per}, due to~\eqref{underline_Q_restriction} and using the descriptor method, we obtain:

\begin{eqnarray*}
\dot{\underline{V}} & \geq & -x^{\top}\underline{\Xi}^{0}x
-2\sum_{s=1}^{M-1}\sum_{z=s+1}^{M}F_{s}(H_s x)^{\top}H_s
\underline{\Upsilon}_{s,z} H_z^\top F_{z}(H_z x) \\ & &
-\sum_{j=1}^{M}F_{j}(H_jx)^{\top} \underline{\Xi}^{j}F_{j}(H_j x)
-2\sum_{j=1}^{M}x^{\top} H_j^\top \underline{\Upsilon}_{0,j}F_{j}(H_j
x) \\ & & -\underline{\gamma} u^{\top} u.
\end{eqnarray*}
Applying similar arguments, there exists $\beta_1 \in \mathbb{R}$ such that
\begin{gather*}
\beta_1 \underline{V}(x) \leq - x^{\top}\underline{\Xi}^{0}x-\sum_{j=1}^{M}F_{j}(H_jx)^{\top} \underline{\Xi}^{j}F_{j}(H_j x) \nonumber \\
-2\sum_{j=1}^{M}x^{\top} H_j^\top\underline{\Upsilon}_{0,j}F_{j}(H_j x)-2\sum_{s=1}^{M-1}\sum_{z=s+1}^{M}F_{s}(H_s x)^{\top}H_s \underline{\Upsilon}_{s,z} H_z^\top F_{z}(H_z x)
\end{gather*}
due to the condition \eqref{eq:beta_one_condi} and Assumption~\ref{assu:integral_bound}. Then, it can be shown that under the conditions \eqref{gamma_underline_condi}, the property
\begin{eqnarray*}
 e^{\beta_{1} t} \underline{\alpha}_1 (\epsilon_1)
-  \frac{\underline{\gamma} \gamma_0^2}{\beta_1} \left[ e^{\beta_1 t} -1 \right] \geq \underline{\alpha}_2(\delta_1), \quad \forall t \in [0,T]
\end{eqnarray*}
holds, which deduces 
\begin{gather*}
\underline{\alpha}_2(\delta_1) \leq \underline{V}(x(t)) \leq \underline{\alpha}_2(\rVert x(t) \rVert) \quad \Rightarrow \quad  \rVert x(t) \rVert \geq \delta_1, \; \forall t \in [0,T]. 
\end{gather*}
This completes the proof.
\end{proof}


\bibliographystyle{ieeetr}
\bibliography{ieeeconf}

\end{document}